\newtheorem{theorem}{Theorem}[section]
\newtheorem{corollary}[theorem]{Corollary}
\newtheorem{lemma}[theorem]{Lemma}
\newtheorem{remark}[theorem]{Remark}
\def\id{{\bf 1}\!\!{\rm I}}
\title{A Hidden Quantum Markov model framework for Entanglement and Topological Order in the AKLT Chain}
\date{}
\begin{document}

\maketitle

\centerline{ \author{\Large Abdessatar Souissi}}

\centerline{Department of Management Information Systems, College of Business and Economics, }
\centerline{Qassim University, Buraydah 51452, Saudi Arabia}
\centerline{\textit{a.souaissi@qu.edu.sa}}
\vskip0.3cm

\centerline{\author{\Large Amenallah Andolsi}}
\centerline{Nuclear Physics and High Energy Research Unit, Faculty of Sciences of Tunis,}
\centerline{ Tunis El Manar University, Tunis, Tunisia }
\centerline{\textit{amenallah.andolsi@fst.utm.tn }}
\vskip0.5cm

\begin{abstract}
This paper introduces a  hidden quantum Markov models (HQMMs) framework to the   Affleck–Kennedy–Lieb–Tasaki  (AKLT) state—a cornerstone example of a symmetry-protected topological (SPT) phase. The model's observation system is the physical spin-1 chain, which emerges from a hidden  spin-½ layer through well-defined quantum emission operation. We show that the underlying Markov dynamics caputure maximal entanglement through the use of significant channels relevant to the AKLT state. We also show that  SPT order induces a covariance on  the observation decoding channels.
This establishes an additional  bridge between the  quantum Machine learning and many-body physics, with promising implication in topological order and quantum information.
\end{abstract}

\textbf{Keywords}{ AKLT, Hidden Markov Model, SPT Order, Quantum Theory, Entanglement}

\section{Introduction}
Hidden Markov models \cite{Rab2002} are recognized as powerful stochastic processes due to their ability to capture complex correlations, with broad applications across various domains \cite{Mor2021}—especially in machine learning tasks \cite{Li17}, where algorithmic implementations  \cite{V03, FV05} have achieved significant success.
Several proposals have been made to generalize hidden Markov models into the quantum setting  \cite{Monras11,  Srin2017, MRDFS22}.  Namely, with the rising interest in quantum computing and quantum machine learning \cite{Li2024} but  a suitable generalized probabilistic picture is remained missing \cite{FLW24}. Recently, an algebraic approach to  HQMM within the framework of quantum probability has been suggested \cite{AGLS24Q}, offering promising features such as entanglement \cite{Abd23} and the ability to represent an important class of matrix product states as observations \cite{Sou25}. This approach defines a  HQMM as 
a hidden system—described by a Quantum Markov Chain (QMC) \cite{accardi1974, ASS20}, and an observation system strongly correlated with it. Information is encoded through sequences of hidden transitions and emission transition expectations, whose  dual maps are quantum channels \cite{AccOhy99}, these represent the fundamental building blocks of the HQMMs framework.
It is worth noting that QMCs serve as the quantum generalization of classical Markov chains and offer a powerful framework for describing the dynamics of quantum systems. They play a particularly foundational role in the theory of finitely correlated states on quantum spin chains \cite{FNW92, fannes2, fannes3}. Beyond this core application, QMCs find broad use in diverse areas such as quantum control~\cite{ticozzi2008}, quantum information theory~\cite{guan2024}, quantum software development~\cite{ying2016}, and quantum communication systems~\cite{wolf2012}. A particularly important subclass, quantum walks, has proven instrumental in developing novel quantum algorithms~\cite{ambainis2003}.

The AKLT model, introduced in \cite{AKLT1987, Affleck1988}, provides a paradigmatic example of a gapped spin-1 system with an exactly solvable ground state described by a Matrix Product State (MPS) \cite{FNW92}. It explicitly satisfies Haldane's conjecture \cite{Haldine83}, which distinguishes integer-spin antiferromagnetic chains—possessing a gapped spectrum (the “Haldane gap”) and exponentially decaying correlations—from their gapless, power-law-correlated half-integer counterparts. Beyond confirming  Haldane's conjecture, the AKLT state has emerged as a foundational model for SPT order \cite{Pollmann2010, Pollmann2012}. This stems from its elegant construction, where each physical spin-1 site arises from projecting two virtual spin-½ degrees of freedom into the symmetric triplet subspace, revealing a hidden layer of entanglement. At the virtual level, the system is constructed on auxiliary spin-½ spaces $\mathcal{H}_{\frac{1}{2}} \cong \mathbb{C}^2$, where the physical $SO(3)$ rotation symmetry is realized projectively through the double-cover map $SU(2) \to SO(3)$. Formally,  a  group $G$ of  symmetry acts on $\mathcal{H}_{\frac{1}{2}}$ via a projective unitary representation $\pi: G \to \mathcal{U}(\mathcal{H}_{\frac{1}{2}})$, characterized by a non-trivial 2-cocycle $\omega: G \times G \to U(1)$. Its cohomology class $[\omega] \in H^{2}(SO(3), U(1)) \cong \mathbb{Z}_2$ directly encodes the $\mathbb{Z}_2$ topological index of the SPT phase \cite{Chen2011b}. This algebraic structure arising from the mismatch between the virtual projective representation $\pi$ and the physical spin-1 representation $\rho: G \to \mathcal{U}(\mathcal{H}_1)$ acting on $\mathcal{H}_1 \cong \mathbb{C}^3$---underpins the robustness of fractionalized spin-½ edge modes and defines the SPT nature of the AKLT phase.

Previous works have shown a deep connection between quantum Markov chains on infinite tensor product of matrix algebras and finitely correlated states including matrix product states ansatz\cite{FNW92}. Moreover, recent algebraic approaches, framed within infinite tensor products of C*-algebras \cite{BR}, have been used to classify topological order in solid-state physics \cite{O21, O22, Tas23}.\\ 
In this context, this work establishes a  HQMM  framework for the AKLT spin chain, capturing its entanglement structure and  SPT order within a quantum probabilistic formalism. We construct the generative triplet $\Xi_{\mathrm{AKLT}} = (\phi_0, \mathcal{E}_H, \mathcal{E}_{O,H})$,
where $\phi_0: \mathcal{B}(\mathcal{H}_{\frac{1}{2}}) \to \mathbb{C}$ initializes the hidden spin-½ system, $\mathcal{E}_H: \mathcal{B}(\mathcal{H}_{\frac{1}{2}}) \otimes \mathcal{B}(\mathcal{H}_{\frac{1}{2}}) \to \mathcal{B}(\mathcal{H}_{\frac{1}{2}})$ governs its hidden dynamics, and $\mathcal{E}_{O,H}: \mathcal{B}(\mathcal{H}_{\frac{1}{2}}) \otimes \mathcal{B}(\mathcal{H}_{1}) \to \mathcal{B}(\mathcal{H}_{\frac{1}{2}})$ encodes the $G$-equivariant emission transition to the physical spin-$1$ space. The sequential quantum dynamics is defined by the transition map
\begin{equation*}
    E_{X_\ell,Y_\ell}(\cdot) = \mathcal{E}_H\big(\mathcal{E}_{O,H}(X_\ell \otimes Y_\ell) \otimes \cdot\big).
\end{equation*}

Within this framework, we demonstrate the following principal results:  Theorem~\ref{thm:main1} provides an explicit HQMM decomposition of the AKLT state, determining virtual-physical correlations within the model. We explicitly characterize the conditional dependency of physical observations on the underlying Markov process through the AKLT quantum channel.
 Theorem~\ref{thm:entropy}  studies the entanglement structure of the underlying Markov dynamics. It establishes that the Choi–Jamiołkowski state $J(\mathcal{E}_H^*)$ of the encoding channel $\mathcal{E}_H^*$ is a pure, maximally entangled bipartite state between the reference and output spaces. The resulting entanglement entropy $S(\rho_R) = S(\rho_B) = 1$ ebit confirms that the hidden dynamics fully encode the chain's entanglement structure \cite{FC2024, Kitaev06}. Theorem~\ref{thm:EOH-sym} explicitly formulates the $G$-covariance of the emission process, proving that $\mathcal{E}_{O,H}$ is $G$-equivariant. The covariance condition
\[
\mathcal{E}_{O,H}\big(\pi(g)X\pi(g)^\dagger \otimes \rho(g)Y\rho(g)^\dagger\big) = \pi(g)\,\mathcal{E}_{O,H}(X \otimes Y)\,\pi(g)^\dagger
\]
holds with respect to the projective virtual representation $\pi: G \to \mathcal{U}(\mathcal{H}_{\frac{1}{2}})$ and the physical representation $\rho: G \to \mathcal{U}(\mathcal{H}_1)$, thereby embedding the SPT index in the HQMM's symmetry structure.

This framework translates the topological and entanglement properties of the AKLT chain into a  well-defined quantum stochastic process. Our approach directly connects tensor network states \cite{SV12} to   quantum models \cite{BCLY20}, with promising implications for   quantum machine learning  and quantum computation architectures. Natural extensions include multi-dimensional generalizations to cluster states and projected entangled pair states \cite{Schuch2011, Chen2009}, with promising implications for measurement-based quantum computation \cite{Chen24, Von08} makes them prime candidates for this framework. The present work thus opens a pathway to investigating solid-state models through hidden quantum Markov dynamics, namely with the increasing efficiency of algebraic methods in understanding topological  order \cite{kap25, O22} .

We begin in Section \ref{Sect_prel} with the necessary background and definitions. Building on this foundation, Section \ref{Sect_HQMM_AKLT} is devoted to constructing the  HQMM for the AKLT chain. With this model established, we then explore the entanglement properties of its hidden dynamics in Section \ref{Sect_HiddenEntang}. Furthermore, in Section \ref{Sect_SPT_HQMM}, we reveal how this HQMM framework naturally encodes the Symmetry-Protected Topological (SPT) order characteristic of the AKLT state. We synthesize our findings and offer conclusions in Section \ref{Sect_Disc_Concl}.

\section{Preliminaries}\label{Sect_prel}
\subsection{ Hidden Quantum Markov Models}

Let $\mathcal{H}$ and $\mathcal{K}$ be two  finite-dimensional Hilbert spaces and let $\mathcal{B}(\mathcal{H})$ and $\mathcal{B}(\mathcal{K})$ be the algebras of bounded operators over the Hilbert spaces $\mathcal{H}$ and $\mathcal{K}$, respectively.  A map is $\mathcal{E}$ from  $\mathcal{B}(\mathcal{H})$ to  $\mathcal{B}(\mathcal{K})$ is said to be \emph{completely positive} if all its matrix amplifications
\[
\mathcal{E}: M_n(\mathcal{A}) \to M_n(\mathcal{B}), \quad [a_{ij}] \mapsto [\Phi(a_{ij})]
\]
preserve positivity for every $n \in \mathbb{N}$.   A \emph{quantum channel} is a completely positive, trace-preserving map $\Phi: \mathcal{A} \to \mathcal{B}$ that describes general physical transformations of quantum states. The fundamental duality between state and observable evolution is encoded in the dual map $\Phi^*: \mathcal{B}^* \to \mathcal{A}^*$, defined via the relation
\[
\operatorname{Tr}(\Phi(\rho) X) = \operatorname{Tr}(\rho \, \Phi^*(X)) \quad \text{for all } \rho \in \mathcal{A}_*, \, X \in \mathcal{B}.
\]
This establishes the precise correspondence between the Schrödinger picture (state evolution $\rho \mapsto \Phi(\rho)$) and the Heisenberg picture (observable evolution $X \mapsto \Phi^*(X)$).

A \emph{transition expectation} $\mathcal{E}: \mathcal{A} \otimes \mathcal{B} \to \mathcal{A}$ is a completely positive, identity-preserving map that generalizes the notion of conditional expectation to open quantum systems. The duality between quantum channels and transition expectations manifests through the adjoint relationship: if $\mathcal{E}$ is a transition expectation, then its dual $\mathcal{E}^*: \mathcal{A}^* \to (\mathcal{A} \otimes \mathcal{B})^*$ defines a quantum channel when properly normalized. This framework provides the mathematical foundation for describing emission, absorption, and general interaction processes in quantum information theory.

 HQMM  provides a generative description of quantum stochastic processes through a bipartite structure separating hidden dynamics from observable outputs. Formally, an HQMM is defined as a state on the quasi-local algebra $\mathcal{A}_{H,O} = (\mathcal{B}(\mathcal{H}) \otimes \mathcal{B}(\mathcal{H}_{1}))^{\otimes \mathbb{N}}$, representing an infinite bipartite quantum system where $\mathcal{A}_{H} = \mathcal{B}(\mathcal{H})^{\otimes \mathbb{N}}$ describes the hidden system and $\mathcal{A}_{O} = \mathcal{B}(\mathcal{H}_{1})^{\otimes \mathbb{N}}$ represents the observable output system  \cite{AGLS24Q}. The model is specified by a \emph{generative triplet} $\Xi = (\phi_0, \mathcal{E}_H, \mathcal{E}_{O,H})$, where:

\begin{itemize}
    \item $\phi_0: \mathcal{B}(\mathcal{H})\to \mathbb{C}$ is an initial state on the hidden algebra;
    \item $\mathcal{E}_{H}: \mathcal{B}(\mathcal{H}) \otimes \mathcal{B}(\mathcal{H}) \to \mathcal{B}(\mathcal{H})$ is a CPIP map governing hidden system dynamics;
    \item $\mathcal{E}_{O,H}: \mathcal{B}(\mathcal{H}) \otimes \mathcal{B}(\mathcal{H}_{1})\to \mathcal{B}(\mathcal{H})$ is a CPIP map encoding observable emission.
\end{itemize}

The conventional HQMM follows quantum causality where measurement precedes state evolution. For sequences $X = \bigotimes_{m=0}^n X_m$ and $Y = \bigotimes_{m=0}^n Y_m$, the state is:
\begin{equation}\label{eq:standard_HQMM}
\varphi_{H,O}(X \otimes Y) = \phi_0 \circ E_{X_0,Y_0} \circ E_{X_1,Y_1} \circ \cdots \circ E_{X_n,Y_n}(h),
\end{equation}
where the forward transition map is defined as:
\begin{equation}\label{eq:E_map}
E_{X,Y}(\cdot) := \mathcal{E}_H \big( \mathcal{E}_{O,H}(X \otimes Y) \otimes \cdot \big).
\end{equation}
This structure implements the causal sequence: at each step $m$, the emission map $\mathcal{E}_{O,H}$ first processes the observable $Y_m$ with the current hidden state, followed by hidden state propagation via $\mathcal{E}_{H}$.

\subsection{The AKLT State and its  SPT Order}\label{SbSect_AKLT}

The seminal work of Affleck, Kennedy, Lieb, and Tasaki \cite{Affleck1988} marked a watershed moment in quantum magnetism by furnishing an exactly solvable model that provided rigorous foundation for Haldane's conjecture regarding the existence of a spectral gap in antiferromagnetic spin-1 chains. The AKLT Hamiltonian, defined on a one-dimensional lattice with spin-1 degrees of freedom, embodies profound physical insights through its elegantly simple form:
\[
\hat{H}_{\text{AKLT}} = \sum_{\langle i,j \rangle} \left[ \vec{S}_i \cdot \vec{S}_j + \frac{1}{3} (\vec{S}_i \cdot \vec{S}_j)^2 \right],
\]
where \(\vec{S}_i = (S_i^x, S_i^y, S_i^z)\) are spin-1 operators satisfying \(\mathfrak{su}(2)\) commutation relations
\[
[S^x, S^y]  = iS^z, \quad [S^y, S^z]  = iS^x, \quad [S^z, S^x]  = iS^y
\]

The AKLT state is defined on a spin-1 chain, where the physical Hilbert space at each site is $\mathcal{H}_1 \cong \mathbb{C}^3$, spanned by the $S^z$-eigenbasis $\{|+\rangle, |0\rangle, |-\rangle\}$. Its construction relies on an auxiliary, virtual spin-½ space $\mathcal{H}_{\frac{1}{2}} \cong \mathbb{C}^2$ with the orthonormal basis $\{|\uparrow\rangle, |\downarrow\rangle\}$, which is projectively embedded into the physical spin-1 space.

The AKLT state admits a canonical MPS  representation that exactly captures its entanglement structure. The fundamental objects are three local matrices $\{A_+, A_0, A_-\} \in \mathcal{M}_2(\mathbb{C})$ that mediate between the physical spin-1 space and the auxiliary spin-½ space:

\begin{align}\label{eq:Ak}
A_{+} &= \sqrt{\frac{2}{3}} \, \sigma^{+}, &
A_{0} &= \sqrt{\frac{1}{3}} \, \sigma^{z}, &
A_{-} &= -\sqrt{\frac{2}{3}} \, \sigma^{-}
\end{align}

Here, the operators $\sigma^+$, $\sigma^-$, and $\sigma^z$ act on the virtual space $\mathbb{C}^2$, which is spanned by the basis vectors $|\uparrow\rangle = \begin{pmatrix} 1 \\ 0 \end{pmatrix}$ and $|\downarrow\rangle = \begin{pmatrix} 0 \\ 1 \end{pmatrix}$. Their explicit action is defined by the following outer products:
\begin{equation*}
\begin{aligned}
\sigma^+ &=   \begin{pmatrix} 0 & 1 \\ 0 & 0 \end{pmatrix}, \qquad
\sigma^- &=  \begin{pmatrix} 0 & 0 \\ 1 & 0 \end{pmatrix}, \qquad
\sigma^z &=   \begin{pmatrix} 1 & 0 \\ 0 & -1 \end{pmatrix}
\end{aligned}
\end{equation*}

For an $n$-site periodic chain, the AKLT state is constructed as the uniform MPS:

\begin{equation}\label{eqAKLT}
|\psi_{\mathrm{AKLT}}^{(n)}\rangle = \sum_{k_1,\dots,k_n \in \{+,0,-\}} \mathrm{Tr}\left(A_{k_1} A_{k_2} \cdots A_{k_n}\right) |k_1 k_2 \cdots k_n\rangle
\end{equation}

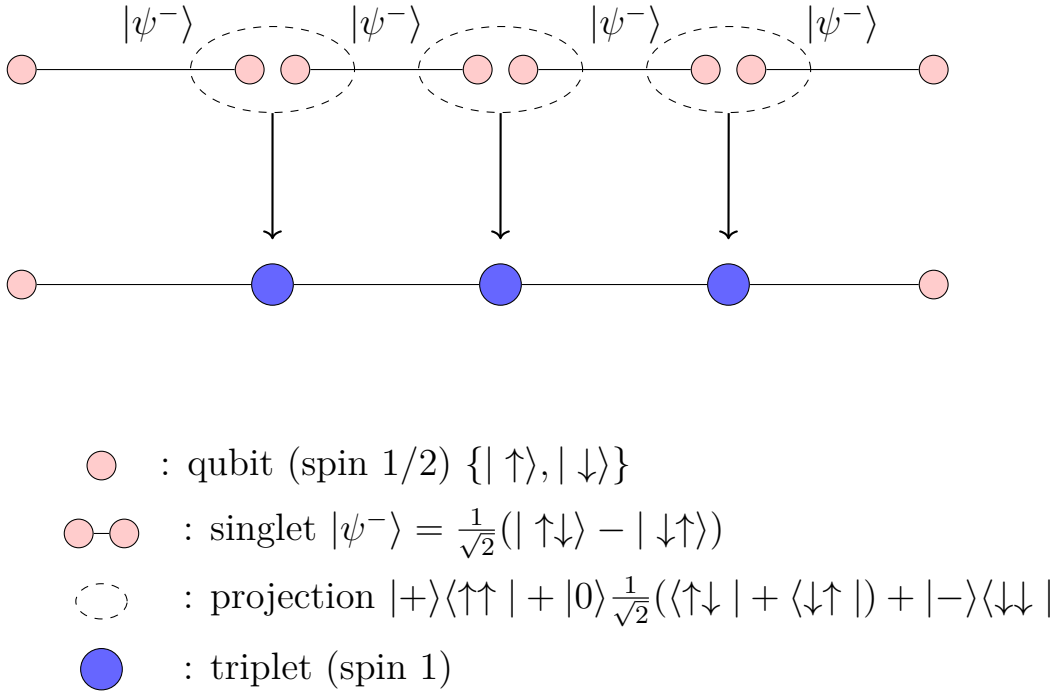
\begin{figure}[h!]
\centering
\begin{tikzpicture}[scale=1.5, every node/.style={scale=1.2}]
    \tikzstyle{qubit}=[circle, draw, fill=red!20, minimum size=9pt, inner sep=0pt]
    \tikzstyle{triplet}=[circle, draw, fill=blue!60, minimum size=13pt, inner sep=0pt]
    \tikzstyle{projection}=[ellipse, draw, dashed, minimum width=18pt, minimum height=12pt]

    \node[qubit] (q1) at (-0.7,3) {}; 
    \node[qubit] (q2) at (1.3,3) {};
    \node[qubit] (q3) at (1.7,3) {};
    \node[qubit] (q4) at (3.3,3) {};
    \node[qubit] (q5) at (3.7,3) {};
    \node[qubit] (q6) at (5.3,3) {};
    \node[qubit] (q7) at (5.7,3) {};
    \node[qubit] (q8) at (7.3,3) {};

    \draw (q1) -- (q2);
    \draw (q3) -- (q4);
    \draw (q5) -- (q6);
    \draw (q7) -- (q8);

    \node at (0.5,3.4) {$|\psi^-\rangle$};
    \node at (2.5,3.4) {$|\psi^-\rangle$};
    \node at (4.6,3.4) {$|\psi^-\rangle$};
    \node at (6.5,3.4) {$|\psi^-\rangle$};

    \node[projection, minimum width=10pt, minimum height=6pt, fit=(q2)(q3)] (p1) {};
    \node[projection, minimum width=14pt, minimum height=10pt, fit=(q4)(q5)] (p2) {};
    \node[projection, minimum width=14pt, minimum height=10pt, fit=(q6)(q7)] (p3) {};

    \draw[->, thick] (p1.south) -- (1.5,1.5);
    \draw[->, thick] (p2.south) -- (3.5,1.5);
    \draw[->, thick] (p3.south) -- (5.5,1.5);

    \node[triplet] (t1) at (1.5,1.1) {};
    \node[triplet] (t2) at (3.5,1.1) {};
    \node[triplet] (t3) at (5.5,1.1) {};

    \node[qubit] (qb1) at (-0.7,1.1) {};
    \node[qubit] (qb2) at (7.3,1.1) {};

    \draw (qb1) -- (t1);
    \draw (t1) -- (t2);
    \draw (t2) -- (t3);
    \draw (t3) -- (qb2);

    \node[qubit] (lq) at (0,-0.5) {};
    \node[anchor=west] at (0.4,-0.5) {: qubit (spin $1/2$) $\{|\uparrow\rangle,|\downarrow\rangle\}$};

    \node[qubit] (ls1) at (-0.2,-1.1) {};
    \node[qubit] (ls2) at (0.2,-1.1) {};
    \draw (ls1) -- (ls2);
    \node[anchor=west] at (0.6,-1.1) {: singlet $|\psi^-\rangle = \tfrac{1}{\sqrt{2}}(|\uparrow\downarrow\rangle - |\downarrow\uparrow\rangle)$};

    \node[projection, minimum width=16pt, minimum height=10pt] (lp) at (0,-1.7) {};
    \node[anchor=west] at (0.6,-1.7) {: projection $|+\rangle\langle\uparrow\uparrow| + |0\rangle\frac{1}{\sqrt{2}}(\langle \uparrow\downarrow| +\langle\downarrow\uparrow|) + |-\rangle\langle \downarrow\downarrow|$};

    \node[triplet] (lt) at (0,-2.3) {};
    \node[anchor=west] at (0.6,-2.3) {: triplet (spin 1)};
\end{tikzpicture}
\caption{Schematic diagram with eight qubits on the top row, grouped pairwise into singlet states (dashed ellipses), and projected downward into three spin-$1$ triplet nodes on the lower row. The increased vertical spacing highlights the correspondence between the two levels.}
\label{fig1}

\end{figure}

Figure \ref{fig1} illustrates how pairs of spin-½ particles can be combined to form an effective spin-$1$ degree of freedom. At the lower level, the red circles grouped in ellipses represent the tensor product space $\mathcal{H}_{\frac{1}{2}} \otimes \mathcal{H}_{\frac{1}{2}}$, which naturally decomposes into a three-dimensional triplet sector and a one-dimensional singlet sector. The upper blue circles correspond to the symmetric triplet space, which carries the spin-$1$ representation.

The  projection  $P: \mathcal{H}_{\frac{1}{2}} \otimes \mathcal{H}_{ \frac{1}{2}} \to \mathcal{H}_{1}$  implements the Clebsch-Gordan decomposition $\frac{1}{2} \otimes \frac{1}{2} = 1 \oplus 0$:
\begin{equation}
P = |+\rangle \langle \uparrow \uparrow| + |-\rangle \langle \downarrow \downarrow| + |0\rangle \frac{1}{\sqrt{2}}\left( \langle \uparrow \downarrow| + \langle \downarrow \uparrow| \right).
\end{equation}
This projection satisfies $P^\dagger P = \Pi_{\text{sym}}$ and $PP^\dagger = \id_{\mathcal{H}_{1}}$, establishing an isometric embedding that encodes the Haldane phase through its entanglement structure. Quantum channels channels are fundamental objects in the  analysis of the AKLT model. The AKLT channel is   defined  on $\mathcal{B}(\mathcal{H}_{\frac{1}{2}}) \equiv \mathbb{M}_2(\mathbb{C})$ as:
\begin{equation}\label{eq:aklE_channel}
    \Phi_{\mathrm{AKLT}}(Z) = \sum_{k \in \{+,0,-\}} A_k Z A_k^\dagger.
\end{equation}

The quantum channel $\Phi_{\mathrm{AKLT}}$ is bi-stochastic: it is both trace-preserving \(\operatorname{Tr}(\Phi_{\mathrm{AKLT}}(Z)) = \operatorname{Tr}(Z)\) and identity-preserving \(\Phi_{\mathrm{AKLT}}(\id_{\mathcal{H}_{1/2}}) = \id_{\mathcal{H}_{1/2}}\). When interpreted as a transfer matrix within the valence-bond-solid formalism, its spectral structure dictates the system's correlation functions: the largest eigenvalue is unity, while the second largest eigenvalue in magnitude, \(\lambda_2\), sets the dominant correlation length via \(\xi^{-1} = -\ln |\lambda_2|\) \cite{Pollmann2012}. The contractive nature of the channel upon iteration, a direct consequence of this spectral gap, underpins the mixing behavior and the exponential decay of correlations characteristic of the gapped phase of the AKLT state \cite{SA25}.

Let $G$ be a subgroup of the special orthogonal group in three dimensions $G = \mathrm{SO}(3)$, which constitutes our symmetry group. This group acts on the physical and virtual spaces of our system via distinct unitary representations. The physical Hilbert space is $\mathcal{H}_{1}$, on which $G$ acts via the fundamental, integer-spin representation $\rho: G \to \mathcal{U}(\mathcal{H}_{\frac{1}{2}})$, which is isomorphic to the spin-1 representation. The virtual, or auxiliary, Hilbert space is $\mathcal{H}_{\frac{1}{2}}$, where $G$ acts via the projective spin-½ representation $\pi: G \to \mathcal{U}(\mathcal{H}_{\frac{1}{2}})$. This representation $\pi$ is projective, meaning it satisfies the composition rule $\pi(g_1)\pi(g_2) = \omega(g_1,g_2)\pi(g_1g_2)$ for all $g_1, g_2 \in G$, with a non-trivial cocycle $\omega: G \times G \to \mathrm{U}(1)$. The cohomology class $[\omega]$ is a non-trivial element of the second group cohomology $H^2(\mathrm{SO}(3),\mathrm{U}(1))$, which is isomorphic to $\mathbb{Z}_2$, thereby encoding the double-cover relationship between $\mathrm{SO}(3)$ and $\mathrm{SU}(2)$ \cite{Chen2011b}.

A crucial property of the AKLT state, which guarantees its  non-trivial  SPT  phase  \cite{AKLT1987}, is the exact intertwining relation between the virtual representation $\pi$ and the physical representation $\rho$.  This relation is mathematically expressed as follows: For every $g \in G$ and for every  $k \in \{+, 0, -\}$, The MPS tensors $\{A_{k'}\}$ satisfy  the following covariance relation:
\begin{equation}\label{eq:Cov}
    \sum_{k'} \rho(g)_{k k'} A_{k'} = \pi(g) A_k \pi(g)^\dagger
\end{equation}
where $\rho(g)_{kk'} = \langle k | \rho(g) |k' \rangle$ denotes the matrix element of the operator $\rho(g)$. This equation means that a symmetry transformation $g$ applied to the physical index of the tensor is equivalent to a conjugation of the virtual operators by the projective representation $\pi(g)$ \cite{Pollmann2010}.

The symmetry action extends naturally to the dihedral group $D_2 \cong \mathbb{Z}_2 \times \mathbb{Z}_2$, a finite subgroup of $\mathrm{SO}(3)$ consisting of $\pi$-rotations about three orthogonal axes. This subgroup is a key symmetry of the AKLT Hamiltonian. The non-triviality of the SPT phase under this $D_2$ symmetry is detected by the Ogata index \cite{Tas23}, a $\mathbb{Z}_2$-valued topological invariant.
For a translation-invariant MPS that is invariant under the dihedral group $D_2 = \{e, g_x, g_y, g_z\}$ (the group of $\pi$-rotations about orthogonal axes), the SPT phase is classified by a $\mathbb{Z}_2$ index. This index is determined by the projective commutation relation of the symmetry actions on the virtual space.

Let $U_g$ be the unitary operators implementing the symmetry $g \in D_2$ on the physical spin-1 space.  For the generators $g_x$ and $g_y$, the projective algebra is:
\[
U_x^2 = \eta_x I, \quad U_y^2 = \eta_y I, \quad U_x U_y = \eta_{xy} U_y U_x,
\]
with $\eta_x, \eta_y, \eta_{xy} \in \{\pm 1\}$. The $\mathbb{Z}_2$ index characterizing the SPT phase is given by $\theta = \eta_{xy}$.

For the AKLT state, with virtual space dimension $d=2$, one finds $\eta_x = \eta_y = +1$ but $\eta_{xy} = -1$, yielding the non-trivial index $\theta = -1$. This value is a topological invariant, which can be computed via the gauge-invariant formula:
\[
\theta = \frac{1}{d} \mathrm{Tr}(U_x U_y U_x^\dagger U_y^\dagger).
\]
Substituting $d=2$ and $U_x U_y U_x^\dagger U_y^\dagger = -I$ for the AKLT state confirms $\theta = -1$ \cite{Pollmann2010, Pollmann2012}.

\section{HQMM Structure of the AKLT State}\label{Sect_HQMM_AKLT}
The AKLT state's unique structure can be naturally formulated as a  HQMM, revealing a profound connection between topological quantum matter and quantum stochastic processes. This framework cleanly separates the system into hidden virtual degrees of freedom that encode topological information and observable physical degrees accessible to measurement.

The \emph{Hidden Quantum Markov Model} for the AKLT chain is specified by the generative triplet $\Xi_{\mathrm{AKLT}} = (\phi_0, \mathcal{E}_H, \mathcal{E}_{O,H})$ where:
\begin{itemize}
    \item $\phi_0: \mathcal{B}(\mathcal{H}_{\frac{1}{2}}) \to \mathbb{C}$ initializes the hidden system
    \item $\mathcal{E}_H: \mathcal{B}(\mathcal{H}_{\frac{1}{2}}) \otimes \mathcal{B}(\mathcal{H}_{\frac{1}{2}}) \to \mathcal{B}(\mathcal{H}_{\frac{1}{2}})$ governs the hidden dynamics  of the spin-½ system.
    \item $\mathcal{E}_{O,H}: \mathcal{B}(\mathcal{H}_{\frac{1}{2}}) \otimes \mathcal{B}(\mathcal{H}_{1}) \to \mathcal{B}(\mathcal{H}_{\frac{1}{2}})$ emission transition expectation that encodes the spin-1 system through the spin-½ hidden system.
\end{itemize}
The sequential structure follows quantum causality through the transition map
\begin{equation}\label{eq:TXY}
  E_{X_\ell,Y_\ell}(\cdot) = \mathcal{E}_H(\mathcal{E}_{O,H}(X_\ell \otimes Y_\ell) \otimes \cdot)
\end{equation}
generating the complete state
\begin{equation}\label{eq_TXY}
 \varphi_{O,H}(X \otimes Y) = \phi_0 \circ E_{X_0,Y_0} \circ E_{X_1,Y_1} \circ \cdots \circ E_{X_n,Y_n}(\id_{\mathcal{H}_{\frac{1}{2}}}),
\end{equation}
where $X = X_1\otimes\cdots X_n\in \mathcal{B}(\mathcal{H}_{\frac{1}{2}})^{\otimes n}$ and $Y = Y_1\otimes\cdots Y_n\in \mathcal{B}(\mathcal{H}_{1})^{\otimes n}$.

This HQMM formulation provides a powerful framework for understanding how the AKLT state's topological properties emerge from the hidden quantum Markovian dynamics, which we will now analyze through the lens of symmetry protection.

 \begin{center}
     \begin{tikzpicture}[
    font=\small,
    alg/.style={circle, draw=black, thick, minimum size=1cm},
    map/.style={rectangle, draw=black, thick, rounded corners=3pt, fill=orange!20, minimum width=1.8cm, minimum height=0.8cm},
    arrow/.style={->, thick, >=latex},
    node distance=2.2cm and 1.2cm
]

 Title
\node[align=center, font=\large\bfseries] at (5,6)     {HQMM for AKLT Spin Chain};

\node (hiddenlabel) at (12.5,4) { spin-½ (hidden)};
\node (obslabel) at (12.5,1) { spin-1 (observation) };

\node[alg, fill=red!15] (h0) at (0,4) {$\mathcal{B}(\mathcal{H}_{\frac{1}{2}})$};
\node[alg, fill=red!15] (h1) at (3,4) {$\mathcal{B}(\mathcal{H}_{\frac{1}{2}})$};
\node[alg, fill=red!15] (h2) at (7,4) {$\mathcal{B}(\mathcal{H}_{\frac{1}{2}})$};
\node at (6,3.2) {$\cdots$};
\node[alg, fill=red!15] (hn) at (10,4) {$\mathcal{B}(\mathcal{H}_{\frac{1}{2}})$};

\node[alg, fill=blue!15] (p0) at (1.5,1) {$\mathcal{B}(\mathcal{H}_{1})$};
\node[alg, fill=blue!15] (p1) at (4.5,1) {$\mathcal{B}(\mathcal{H}_{1})$};
\node[alg, fill=blue!15] (p2) at (8.5,1) {$\mathcal{B}(\mathcal{H}_{1})$};
\node at (4.5,1.5) {$\cdots$};

\node[map] (T0) at (1.5,3.2) {$E_{X_0,Y_0}$};
\node[map] (T1) at (4.5,3.2) {$E_{X_1,Y_1}$} ;
\node[map] (Tn) at (8.5,3.2) {$E_{X_n,Y_n}$};


\draw[arrow, dashed] (p0.north) -- node[right, pos=0.3] {\scriptsize $Y_0$} (T0.south);
\draw[arrow, dashed] (p1.north) -- node[right, pos=0.3] {\scriptsize $Y_1$} (T1.south);
\draw[arrow, dashed] (p2.north) -- node[right, pos=0.3] {\scriptsize $Y_n$} (Tn.south);

\draw[arrow] (h0.south) to[out=-90, in=180] node[below, pos=0.5] {\scriptsize $X_0$} (T0.west);
\draw[arrow] (h1.south) to[out=-90, in=180]  node[below, pos=0.5] {\scriptsize $X_1$} (T1.west);
\draw[arrow] (h2.south) to[out=-90, in=180] node[below, pos=0.5] {\scriptsize $X_n$} (Tn.west);
\draw[arrow] (hn.south) to[out= -90, in=180] node[below, pos=0.3] {\scriptsize $I_{\mathcal{H}_{\frac{1}{2}}}$} (Tn.east);

\node[left=1cm of h0] (phi0) {$\phi_0$};
\draw[arrow] (phi0.east) --   (h0.west);


\node[above=0.2cm of h0] {$t=0$};
\node[above=0.2cm of h1] {$t=1$};
\node[above=0.2cm of h2] {$t=n$};
\node[above=0.2cm of hn] {$t=n+1$};

\node[align=center, text width=14cm, font=\footnotesize] at (5,-1) {
    \textbf{Hidden correlations in the AKLT chain:} The HQMM represents sequential transitions between virtual spin-½ degrees of freedom (hidden system) and physical spin-1 measurements (observation
    system) via transition maps $E_{X_i,Y_i}$ that generate the complete quantum state correlations while       preserving the valence bond solid structure. \label{fig:aklE_hqmm}

};


\end{tikzpicture}
 \end{center}

  This HQMM formulation provides powerful quantum information tools to analyze entanglement scaling, quantum memory effects, and topological stability under decoherence, establishing fundamental connections between quantum many-body physics and quantum computational models.

In the subsequent analysis, we will explicitly derive the transition expectations governing the system's dynamics and elucidate the intricate correlation structure of the underlying hidden Markov model. This will be achieved by examining how the hidden algebra $\mathcal{B}(\mathcal{H}_{\frac{1}{2}})^{\otimes \mathbb{N}}$, which encodes the virtual spin network responsible for topological entanglement, correlates with the observation algebra $\mathcal{B}(\mathcal{H}_{1})^{\otimes \mathbb{N}}$, representing the physically measurable degrees of freedom, via the CPIP maps $\mathcal{E}_H$ and $\mathcal{E}_{O,H}$ implemented through precise virtual isometry $V$ that captures the hidden entanglement structure of the model and  an observation isometry $W$, that preserves the SPT order.

The mapping $V: \mathcal{H}_{\frac{1}{2}} \to \mathcal{H}_{\frac{1}{2}} \otimes \mathcal{H}_{\frac{1}{2}}$ is a partial isometry that builds entanglement between virtual spin spaces. A specific choice of $V$ is given by:
\begin{equation}\label{eq:V}
  V = |\Psi^-\rangle \langle \uparrow | + |\Psi^+\rangle \langle \downarrow |,
\end{equation}
where \(|\Psi^-\rangle = \frac{1}{\sqrt{2}}(|\uparrow\downarrow\rangle - |\downarrow\uparrow\rangle)\) is the singlet state forming the original AKLT valence bonds, and \(|\Psi^+\rangle = \frac{1}{\sqrt{2}}(|\uparrow\downarrow\rangle + |\downarrow\uparrow\rangle)\) is the symmetric triplet component. While this specific $V$ is a canonical choice, various other isometries can be selected to encode the system's entanglement and symmetry, each leading to a different physical state that can still saturate the maximal entanglement of the underlying virtual system.

Complementarily, the \emph{observation isometry} $W: \mathcal{H}_{\frac{1}{2}} \to \mathcal{H}_{\frac{1}{2}} \otimes \mathcal{H}_{1}$ maps virtual states to physical spin-1 observables:
\begin{equation}\label{eq:W}
  W\xi = \sum_{k \in \{+,0,-\}} A_{k}\xi \otimes |k\rangle,\quad \forall \xi \in \mathcal{H}_{\frac{1}{2}}
\end{equation}
where the matrices \(A_{k}\) are the standard AKLT MPS tensors. This structure explicitly reveals the AKLT state as a quantum autoregressive process, where sequential application of $W$ generates the physical spin-1 chain from the virtual space evolution.

\begin{lemma}\label{lem:Eh}
  The map $\mathcal{E}_H: \mathcal{B}(\mathcal{H}_{\frac{1}{2}}) \otimes \mathcal{B}(\mathcal{H}_{\frac{1}{2}}) \to \mathcal{B}(\mathcal{H}_{\frac{1}{2}})$ defined by the Stinespring dilation
\begin{equation}\label{eq:EH}
\mathcal{E}_H(X \otimes X') = V^\dagger(X \otimes X')V
\end{equation}
where $V: \mathcal{H}_{\frac{1}{2}} \to \mathcal{H}_{\frac{1}{2}} \otimes \mathcal{H}_{\frac{1}{2}}$ is the partial isometry from (\ref{eq:V}), constitutes a hidden transition expectation CPIP.
The dual quantum channel $\mathcal{E}_H^*: \mathcal{B}(\mathcal{H}_{\frac{1}{2}}) \to \mathcal{B}(\mathcal{H}_{\frac{1}{2}}) \otimes \mathcal{B}(\mathcal{H}_{\frac{1}{2}})$, defined via the Hilbert-Schmidt adjoint, acts as:
\begin{equation}\label{eq:EH_dual}
\mathcal{E}_H^*(\rho) = V\rho V^\dagger.
\end{equation}
This implements virtual bond preparation in the AKLT entanglement structure.
\end{lemma}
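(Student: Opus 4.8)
The plan is to reduce every assertion in the statement to the single algebraic fact that $V$ is an isometry, i.e. $V^\dagger V = \id_{\mathcal{H}_{\frac{1}{2}}}$. First I would establish this by direct computation from (\ref{eq:V}). Writing $V^\dagger = |\uparrow\rangle\langle\Psi^-| + |\downarrow\rangle\langle\Psi^+|$ and expanding $V^\dagger V$, the four cross terms collapse using the orthonormality of the Bell pair, $\langle\Psi^-|\Psi^-\rangle = \langle\Psi^+|\Psi^+\rangle = 1$ and $\langle\Psi^-|\Psi^+\rangle = 0$, leaving $V^\dagger V = |\uparrow\rangle\langle\uparrow| + |\downarrow\rangle\langle\downarrow| = \id_{\mathcal{H}_{\frac{1}{2}}}$. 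I would note in passing that $VV^\dagger$ is only the projection onto $\mathrm{span}\{|\Psi^-\rangle,|\Psi^+\rangle\}$, consistent with $V$ being a partial isometry rather than a unitary.

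With the isometry identity in hand, I would treat the tensor product algebra $\mathcal{B}(\mathcal{H}_{\frac{1}{2}}) \otimes \mathcal{B}(\mathcal{H}_{\frac{1}{2}})$ as $\mathcal{B}(\mathcal{H}_{\frac{1}{2}} \otimes \mathcal{H}_{\frac{1}{2}})$, so that the definition (\ref{eq:EH}) extends by linearity to $\mathcal{E}_H(Z) = V^\dagger Z V$ for every $Z$ in this algebra, agreeing with the given formula on simple tensors. Complete positivity is then immediate: $\mathcal{E}_H$ is in Stinespring form with the single Kraus operator $V^\dagger$ and the trivial (identity) representation, so each matrix amplification maps positive elements to positive elements. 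Identity preservation follows by evaluating on the unit, $\mathcal{E}_H(\id \otimes \id) = V^\dagger(\id\otimes\id)V = V^\dagger V = \id_{\mathcal{H}_{\frac{1}{2}}}$, using exactly the isometry property. Together these two properties certify that $\mathcal{E}_H$ is a CPIP map, hence a legitimate hidden transition expectation in the sense recalled in Section~\ref{Sect_prel}.

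For the dual map, I would invoke the Hilbert–Schmidt pairing $\langle A, B\rangle = \Tr(A^\dagger B)$ and compute, by cyclicity of the trace, $\Tr(\rho\,\mathcal{E}_H(Z)) = \Tr(\rho V^\dagger Z V) = \Tr(V\rho V^\dagger Z)$ for all $\rho, Z$; comparison with the defining relation $\Tr(\mathcal{E}_H^*(\rho) Z) = \Tr(\rho\,\mathcal{E}_H(Z))$ identifies $\mathcal{E}_H^*(\rho) = V\rho V^\dagger$, which is (\ref{eq:EH_dual}). I would then confirm that this is a bona fide quantum channel: it is completely positive as a single-Kraus-operator map with Kraus operator $V$, and trace-preserving since $\Tr(V\rho V^\dagger) = \Tr(V^\dagger V\rho) = \Tr(\rho)$, again a direct consequence of the isometry identity. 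The interpretation as virtual-bond preparation then follows by observing that $\mathcal{E}_H^*$ carries a virtual state $\rho$ into the entangled pair generated by $V$.

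There is no serious obstacle here; the content is a chain of short verifications all anchored on $V^\dagger V = \id_{\mathcal{H}_{\frac{1}{2}}}$. The only point deserving care is bookkeeping: one must keep straight which conjugation ($V^\dagger(\cdot)V$ versus $V(\cdot)V^\dagger$) acts on which space, and remember that $V$ is a partial isometry so $VV^\dagger \neq \id$ on $\mathcal{H}_{\frac{1}{2}} \otimes \mathcal{H}_{\frac{1}{2}}$. Both the identity-preservation and the trace-preservation arguments rely solely on $V^\dagger V = \id$ and never on $VV^\dagger$, which is precisely what makes the partial isometry sufficient.
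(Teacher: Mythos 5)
Your proposal is correct and follows essentially the same route as the paper's proof: establish $V^\dagger V = \id$ (the paper does this by acting on the basis vectors $|\uparrow\rangle,|\downarrow\rangle$, you via Bell-state orthonormality — equivalent computations), deduce complete positivity from the Stinespring/single-Kraus form, get unitality from $V^\dagger V=\id$, and identify the dual via the Hilbert--Schmidt pairing and cyclicity of the trace. Your added checks (that $VV^\dagger$ is only a projection, and that $\mathcal{E}_H^*$ is trace-preserving) are correct minor supplements rather than a different argument.
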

\begin{proof}
We first verify that $V$ is a partial isometry. From the definition (\ref{eq:V}), we compute $V^\dagger V$:
\begin{align*}
V^\dagger V |\uparrow\rangle &= V^\dagger\left( \frac{1}{\sqrt{2}}(|\uparrow\downarrow\rangle - |\downarrow\uparrow\rangle) \right)
= \frac{1}{2}(|\uparrow\rangle + |\downarrow\rangle) - \frac{1}{2}(-|\uparrow\rangle + |\downarrow\rangle) = |\uparrow\rangle, \\
V^\dagger V |\downarrow\rangle &= V^\dagger\left( \frac{1}{\sqrt{2}}(|\uparrow\downarrow\rangle + |\downarrow\uparrow\rangle) \right)
= \frac{1}{2}(|\uparrow\rangle + |\downarrow\rangle) + \frac{1}{2}(-|\uparrow\rangle + |\downarrow\rangle) = |\downarrow\rangle.
\end{align*}
Hence $V^\dagger V = \id$, proving $V$ is an isometry (and therefore a partial isometry).

Complete positivity of $\mathcal{E}_H$ follows from the Stinespring dilation: for any $n \in \mathbb{N}$ and $P \geq 0$ in $\mathcal{B}(\mathbb{C}^n \otimes \mathcal{H}_{\frac{1}{2}} \otimes \mathcal{H}_{\frac{1}{2}})$,
\[
(\id_n \otimes \mathcal{E}_H)(P) = (\id_n \otimes V)^\dagger P (\id_n \otimes V) \geq 0.
\]
Unitality is immediate:
\[
\mathcal{E}_H(\id_{\mathcal{H}_{\frac{1}{2}}} \otimes \id_{\mathcal{H}_{\frac{1}{2}}}) = V^\dagger(\id_{\mathcal{H}_{\frac{1}{2}}} \otimes \id_{\mathcal{H}_{\frac{1}{2}}})V = V^\dagger V = \id_{\mathcal{H}_{\frac{1}{2}}}.
\]

For the dual channel, take any $X,X',\rho \in \mathcal{B}(\mathcal{H}_{\frac{1}{2}})$:
\begin{align*}
\langle \mathcal{E}_H(X \otimes X'), \rho \rangle_{\mathrm{HS}}
&= \mathrm{Tr}[(V^\dagger(X \otimes X')V)^\dagger \rho] \\
&= \mathrm{Tr}[V^\dagger(X'^\dagger \otimes X^\dagger)V\rho] \\
&= \mathrm{Tr}[(X'^\dagger \otimes X^\dagger)V\rho V^\dagger] \\
&= \langle X \otimes X', V\rho V^\dagger \rangle_{\mathrm{HS}}.
\end{align*}
where we used the cyclic property of the trace. This proves $\mathcal{E}_H^*(\rho) = V\rho V^\dagger$.

For $\rho = p|\uparrow\rangle\langle\uparrow| + (1-p)|\downarrow\rangle\langle\downarrow|$, we compute:
\begin{align*}
\mathcal{E}_H^*(\rho) &= p \cdot V|\uparrow\rangle\langle\uparrow|V^\dagger + (1-p) \cdot V|\downarrow\rangle\langle\downarrow|V^\dagger \\
&= \frac{p}{2}(|\uparrow\downarrow\rangle - |\downarrow\uparrow\rangle)(\langle\uparrow\downarrow| - \langle\downarrow\uparrow|) \\
&\quad + \frac{1-p}{2}(|\uparrow\downarrow\rangle + |\downarrow\uparrow\rangle)(\langle\uparrow\downarrow| + \langle\downarrow\uparrow|) \\
&= \frac{1}{2}|\uparrow\downarrow\rangle\langle\uparrow\downarrow| + \frac{1}{2}|\downarrow\uparrow\rangle\langle\downarrow\uparrow| \\
&\quad + \frac{1-2p}{2}(|\uparrow\downarrow\rangle\langle\downarrow\uparrow| + |\downarrow\uparrow\rangle\langle\uparrow\downarrow|)
\end{align*}
This expression is verified by checking the extreme cases: when $p=1$ (pure $|\uparrow\rangle$), we recover the singlet state projector (up to normalization), and when $p=0$ (pure $|\downarrow\rangle$), we obtain the triplet state projector.
\end{proof}

\begin{remark}
  The hidden transition expectation $\mathcal{E}_H$ and its dual $\mathcal{E}_H^*$ link the dynamics of the AKLT chain at consecutive times. The dual channel $\mathcal{E}_H^*$ acts on density matrices, extending a state $\rho^{(n)}$ at time $n$ to an entangled bipartite state $\mathcal{E}_H^*(\rho^{(n)}) = V\rho^{(n)}V^\dagger$ at times $(n-1,n)$.

Conversely, $\mathcal{E}_H$ acts on operators. Its action on a generic operator $X$ is given by the conditional expectation:
\[
\mathcal{E}_H(X) = V^\dagger X V.
\]
This projects the joint operator $X$ on the bipartite system $(n-1,n)$ down to an operator on the marginal system at time $n$, enforcing the Markovian structure via $\rho^{(n)} = \mathcal{E}_H(\rho^{(n-1,n)})$.

This structure reveals the AKLT chain's quantum memory: the isometry $V$ creates the virtual singlets that propagate correlations, with the virtual bond space correlating successive temporal slices.

This dual pair of operations ($\mathcal{E}_H$, $\mathcal{E}_H^*$) constitutes a consistent family of conditional expectations and state preparations that implement the quantum Markov property. The composition $\mathcal{E}_H \circ \mathcal{E}_H^* = \id_{\mathcal{H}_{\frac{1}{2}}}$ ensures that extending then marginalizing returns the original state, while $\mathcal{E}_H^* \circ \mathcal{E}_H$ projects onto the range of $V$, which encodes the specific entanglement structure of the AKLT virtual bonds. This mathematical structure provides a precise formulation of how the HQMM captures the temporal correlations and entanglement propagation in symmetry-protected topological phases.
\end{remark}

\begin{lemma}\label{lem:emission_transition}
Let $W: \mathcal{H}_{\frac{1}{2}} \to \mathcal{H}_{\frac{1}{2}} \otimes \mathcal{H}_{1}$ be the partial isometry from (\ref{eq:W}). The  map $\mathcal{E}_{O,H}: \mathcal{B}(\mathcal{H}_{\frac{1}{2}}) \otimes \mathcal{B}(\mathcal{H}_{1}) \to \mathcal{B}(\mathcal{H}_{\frac{1}{2}})$ defined by
\begin{equation}\label{eq:EOH}
\mathcal{E}_{O,H}(X \otimes Y) = W^{\dagger} X\otimes Y W = \sum_{k,k'} \langle k' | Y | k \rangle  A_{k} X A_{k'}^{\dagger}
\end{equation}
is an emission transition expectation. Its dual channel $\mathcal{E}_{O,H}^*: \mathcal{B}(\mathcal{H}_{\frac{1}{2}}) \to \mathcal{B}(\mathcal{H}_{\frac{1}{2}}) \otimes \mathcal{B}(\mathcal{H}_{1})$ acts as:
\begin{equation}\label{eq:EOH_dual}
\mathcal{E}_{O,H}^*(Z) = \sum_{k,k'} A_{k}^{\dagger} Z A_{k'} \otimes |k\rangle\langle k'|.
\end{equation}
\end{lemma}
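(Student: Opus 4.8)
The plan is to follow the same three-step template used for Lemma~\ref{lem:Eh}, now with the observation isometry $W$ of \eqref{eq:W} playing the role of $V$: first establish that $W$ is an isometry, then read off complete positivity and unitality from the Stinespring form $\mathcal{E}_{O,H}(T)=W^\dagger T W$, and finally extract the dual through a Hilbert--Schmidt pairing.

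First I would compute $W^\dagger W$ directly from \eqref{eq:W}. For $\xi,\eta\in\mathcal{H}_{\frac{1}{2}}$ one has $\langle\eta,W^\dagger W\xi\rangle=\langle W\eta,W\xi\rangle=\sum_{k,k'}\langle A_k\eta,A_{k'}\xi\rangle\langle k|k'\rangle=\sum_k\langle\eta,A_k^\dagger A_k\xi\rangle$, so that $W^\dagger W=\sum_k A_k^\dagger A_k$. This operator equals $\id_{\mathcal{H}_{\frac{1}{2}}}$ exactly because $\Phi_{\mathrm{AKLT}}$ is trace-preserving: the bi-stochasticity recorded in Section~\ref{SbSect_AKLT} is precisely the pair of relations $\sum_k A_k^\dagger A_k=\sum_k A_k A_k^\dagger=\id_{\mathcal{H}_{\frac{1}{2}}}$. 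Hence $W$ is an isometry. To obtain the closed form \eqref{eq:EOH} I would then insert the resolution $W=\sum_k A_k\otimes|k\rangle$ into $W^\dagger(X\otimes Y)W$ and contract the physical indices, so that each pair $(k,k')$ contributes the matrix element of $Y$ between the corresponding spin-$1$ basis vectors multiplied by the appropriate product of AKLT tensors; this contraction is purely mechanical bookkeeping.

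Complete positivity and identity preservation are then immediate. For any $n$ and any $P\ge 0$ in $\mathcal{B}(\mathbb{C}^n\otimes\mathcal{H}_{\frac{1}{2}}\otimes\mathcal{H}_1)$, the amplification $(\id_n\otimes\mathcal{E}_{O,H})(P)=(\id_n\otimes W)^\dagger P(\id_n\otimes W)\ge 0$, giving complete positivity; and $\mathcal{E}_{O,H}(\id_{\mathcal{H}_{\frac{1}{2}}}\otimes\id_{\mathcal{H}_1})=W^\dagger W=\id_{\mathcal{H}_{\frac{1}{2}}}$ by the isometry relation just established, giving unitality. Together these certify that $\mathcal{E}_{O,H}$ is a CPIP emission transition expectation.

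For the dual I would repeat the trace manipulation from Lemma~\ref{lem:Eh}: for $Z\in\mathcal{B}(\mathcal{H}_{\frac{1}{2}})$ and $T\in\mathcal{B}(\mathcal{H}_{\frac{1}{2}}\otimes\mathcal{H}_1)$, cyclicity of the trace gives $\langle\mathcal{E}_{O,H}(T),Z\rangle_{\mathrm{HS}}=\Tr[(W^\dagger T W)^\dagger Z]=\Tr[T^\dagger\,WZW^\dagger]=\langle T,WZW^\dagger\rangle_{\mathrm{HS}}$, so that $\mathcal{E}_{O,H}^*(Z)=WZW^\dagger$. Expanding this with the same resolution of $W$ reproduces the Kraus form \eqref{eq:EOH_dual} over the AKLT tensors contracted against $|k\rangle\langle k'|$. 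I expect no deep obstacle here: the entire argument is structurally identical to Lemma~\ref{lem:Eh}, and its only non-routine input is the normalization $\sum_k A_k^\dagger A_k=\id$, which is already available as the trace-preservation half of the bi-stochasticity of $\Phi_{\mathrm{AKLT}}$. The only place demanding care is the index and adjoint bookkeeping in the two expansion steps, where the physical contractions $\langle k'|Y|k\rangle$ and the daggers on the $A_k$ must be tracked consistently.
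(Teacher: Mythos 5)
Your proposal is correct and follows essentially the same route as the paper's own proof: verify $W^\dagger W=\sum_k A_k^\dagger A_k=\id_{\mathcal{H}_{\frac{1}{2}}}$ (which you rightly identify with the trace-preservation half of the bi-stochasticity of $\Phi_{\mathrm{AKLT}}$), expand the Stinespring form to get the Kraus expression, deduce complete positivity and unitality from the dilation, and obtain $\mathcal{E}_{O,H}^*(Z)=WZW^\dagger$ via the Hilbert--Schmidt pairing before expanding it into \eqref{eq:EOH_dual}. No gaps; the only caveat, which you already flag, is the adjoint/index bookkeeping in the two expansion steps.
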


\begin{proof}
The proof proceeds by systematically establishing each property of the emission transition expectation and its dual channel. We begin by verifying that $W$ is indeed a partial isometry. Computing $W^\dagger W$ using the explicit form from (\ref{eq:W}) yields:
Here is the paraphrased text using \( A_k \) and \( A_{k'} \) notation:

\[
W^\dagger W = \left(\sum_k A_{k}^{\dagger} \otimes \langle k|\right)\left(\sum_{k'} A_{k'} \otimes |k'\rangle\right) = \sum_{k,k'} A_{k}^{\dagger} A_{k'} \otimes \langle k|k'\rangle = \sum_k A_k^{\dagger} A_{k} = \id_{\mathcal{H}_{\frac{1}{2}}},
\]
where the final equality represents the fundamental completeness relation satisfied by the AKLT MPS matrices, confirming the isometric property.

The equivalence between the Stinespring form and the explicit Kraus representation is established through direct computation:
\[
W^\dagger(X \otimes Y)W = \left(\sum_k A_{k}^{\dagger} \otimes \langle k|\right)(X \otimes Y)\left(\sum_{k'} A_{k'} \otimes |k'\rangle\right) = \sum_{k,k'} A_{k}^{\dagger} X A_{k'} \otimes \langle k|Y|k'\rangle = \sum_{k,k'} \langle k'|Y|k\rangle A_{k} X A_{k}^{\dagger}.
\]
where the adjoint symmetry of the inner product ensures the final form matches the definition.

Complete positivity follows immediately from the Stinespring dilation: for any positive operator $P$ in the amplified space, the map $(\id_n \otimes \mathcal{E}_{O,H})(P)$ manifests as a conjugation by the isometric extension $(\id_n \otimes W)$, preserving positivity. Unitality is verified through the chain:
\[
\mathcal{E}_{O,H}(\id_{\mathcal{H}_{\frac{1}{2}}} \otimes \id_{\mathcal{H}_{1}}) = W^\dagger(\id_{\mathcal{H}_{\frac{1}{2}}} \otimes \id_{\mathcal{H}_{1}})W = W^\dagger W = \id_{\mathcal{H}_{\frac{1}{2}}}
\]
demonstrating preservation of the identity.

The dual channel characterization emerges from the Hilbert-Schmidt adjoint relation. For arbitrary operators $X,Y,Z$, we systematically derive:
\[
\langle \mathcal{E}_{O,H}(X \otimes Y), Z \rangle_{\mathrm{HS}} = \mathrm{Tr}[(W^\dagger(X \otimes Y)W)^\dagger Z] = \langle X \otimes Y, W Z W^\dagger \rangle_{\mathrm{HS}},
\]
establishing $\mathcal{E}_{O,H}^*(Z) = W Z W^\dagger$. The explicit Kraus form follows by expanding this expression:
\begin{align*}
W Z W^\dagger &= \left(\sum_k A^{[k]} \otimes |k\rangle\right) Z \left(\sum_{k'} A^{[k']\dagger} \otimes \langle k'|\right) \\
&= \sum_{k,k'} A^{[k]} Z A^{[k']\dagger} \otimes |k\rangle\langle k'|,
\end{align*}
completing the demonstration of all stated properties.
\end{proof}

\begin{theorem}\label{thm:main1}
Let $\phi_0$ be an initial state on $\mathcal{B}(\mathcal{H}_{\frac{1}{2}})$, $\mathcal{E}_H: \mathcal{B}(\mathcal{H}_{\frac{1}{2}})\otimes \mathcal{B}(\mathcal{H}_{\frac{1}{2}})\to \mathcal{B}(\mathcal{H}_{\frac{1}{2}})$ the hidden transition expectation (\ref{eq:EH}), and $\mathcal{E}_{O,H}: \mathcal{B}(\mathcal{H}_{\frac{1}{2}})\otimes \mathcal{B}(\mathcal{H}_{1})\to \mathcal{B}(\mathcal{H}_{\frac{1}{2}})$ the emission transition expectation (\ref{eq:EOH}). For the HQMM $\varphi_{O,H}$ associated with the triple $\Xi = (\phi_0, \mathcal{E}_H, \mathcal{E}_{O,H})$ and local observables $X \otimes Y = \bigotimes_{i=1}^{n} X_i \otimes Y_i$, we have:
\begin{equation}\label{eq_phiOH}
\varphi_{O,H}(X \otimes Y) = \sum_{\substack{k_1,\dots,k_n \\ k'_1,\dots,k'_n}} \left( \prod_{\ell=1}^n \langle k_\ell | Y_\ell | k'_\ell \rangle \right) \varphi_{\substack{k_1,\dots,k_n \\ k'_1,\dots,k'_n}}(X),
\end{equation}
where the coefficient functionals are given by the nested composition:
\begin{equation}\label{eq:coefE_func}
\varphi_{\substack{k_1,\dots,k_n \\ k'_1,\dots,k'_n}}(X) = \phi_0 \left( V^{\dagger} \left( A_{k_1}X_1 A_{k'_1}^\dagger \otimes  \cdots \otimes V^{\dagger}\left( A_{k_n}X_n A_{k'_n}^\dagger \otimes \id_{\mathcal{H}} \right)V \cdots \right)V \right).
\end{equation}
\end{theorem}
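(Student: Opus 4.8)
The plan is to unfold the nested composition defining $\varphi_{O,H}$ in (\ref{eq_TXY}) by substituting the explicit Stinespring/Kraus forms of the two transition expectations and then exploiting multilinearity to separate the observable-dependent scalars from the operator propagation.

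First I would substitute the Kraus form of the emission expectation from Lemma~\ref{lem:emission_transition} into the forward transition map (\ref{eq:TXY}). Combining $\mathcal{E}_{O,H}(X_\ell \otimes Y_\ell) = \sum_{k_\ell, k'_\ell} \langle k_\ell|Y_\ell|k'_\ell\rangle\, A_{k_\ell} X_\ell A_{k'_\ell}^\dagger$ with $\mathcal{E}_H(Z \otimes \cdot) = V^\dagger(Z \otimes \cdot)V$ from Lemma~\ref{lem:Eh}, each transition map takes the form
\[
E_{X_\ell, Y_\ell}(\cdot) = \sum_{k_\ell, k'_\ell} \langle k_\ell|Y_\ell|k'_\ell\rangle\, V^\dagger\!\left(A_{k_\ell} X_\ell A_{k'_\ell}^\dagger \otimes \cdot\right) V .
\]
The crucial observation is that the matrix elements $\langle k_\ell|Y_\ell|k'_\ell\rangle$ are scalars independent of the argument, so each $E_{X_\ell, Y_\ell}$ is a finite linear combination, with scalar weights, of the elementary completely positive maps $F^{X_\ell}_{k_\ell, k'_\ell}(\cdot) := V^\dagger(A_{k_\ell} X_\ell A_{k'_\ell}^\dagger \otimes \cdot)V$.

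Next I would compose these maps along the chain. Because composition of linear maps is multilinear and the scalar weights carried by distinct sites are mutually independent, the composite $E_{X_1,Y_1}\circ\cdots\circ E_{X_n,Y_n}$ expands into a single sum over all index tuples $(k_1,\dots,k_n;k'_1,\dots,k'_n)$, in which the weight factorizes as $\prod_{\ell=1}^n \langle k_\ell|Y_\ell|k'_\ell\rangle$ and the operator part is the nested composition $F^{X_1}_{k_1,k'_1}\circ\cdots\circ F^{X_n}_{k_n,k'_n}(\id_{\mathcal{H}_{\frac{1}{2}}})$. Applying the initial state $\phi_0$ and identifying $\varphi_{\,k_1,\dots,k_n;\,k'_1,\dots,k'_n}(X) = \phi_0\!\left(F^{X_1}_{k_1,k'_1}\circ\cdots\circ F^{X_n}_{k_n,k'_n}(\id_{\mathcal{H}_{\frac{1}{2}}})\right)$ reproduces (\ref{eq_phiOH}); writing out the nested $V^\dagger(\cdots)V$ conjugations explicitly yields the telescoping expression (\ref{eq:coefE_func}). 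A short induction on $n$ makes both the factorization of the scalar product and the nesting of the conjugations rigorous.

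I expect the only genuine subtlety---rather than a true obstacle---to be bookkeeping: keeping each summation index consistently attached to the correct MPS tensor ($A_{k_\ell}$ on the left, $A_{k'_\ell}^\dagger$ on the right) and to the correct slot of $\langle k_\ell|Y_\ell|k'_\ell\rangle$, where a relabeling $k_\ell\leftrightarrow k'_\ell$ may be needed to reconcile the index convention of Lemma~\ref{lem:emission_transition} with the form stated in the theorem. Once the elementary maps $F^{X_\ell}_{k_\ell,k'_\ell}$ have been isolated and their scalar coefficients pulled out, the statement follows from the multilinearity of map composition alone, requiring no further analytic or positivity input beyond Lemmas~\ref{lem:Eh} and~\ref{lem:emission_transition}.
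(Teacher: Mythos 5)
Your proposal is correct and follows essentially the same route as the paper's proof: substitute the Kraus form of $\mathcal{E}_{O,H}$ into the forward transition map, pull the scalar matrix elements $\langle k_\ell|Y_\ell|k'_\ell\rangle$ out by linearity, and establish the nested composition by induction on $n$. Your remark about reconciling the $k_\ell\leftrightarrow k'_\ell$ index convention between Lemma~\ref{lem:emission_transition} and the theorem statement is apt, as the paper itself silently performs this relabeling.
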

\begin{proof}
 The fundamental building block (\ref{eq:E_map})  of the HQMMs   emerges from the  expression:
\[
E_{X_{\ell}, Y_{\ell}}(Z) = \mathcal{E}_{H}(\mathcal{E}_{O,H}(X_\ell \otimes Y_\ell) \otimes Z)
, \qquad \forall Z\in \mathcal{B}_H
\]
Substituting the emission transition expectation:
\begin{align*}
E_{X_n,Y_n}(\id_{\mathcal{H}}) &= \mathcal{E}_H \left( \mathcal{E}_{O,H}(X \otimes Y) \otimes \id_{\mathcal{H}_{\frac{1}{2}}} \right) \\
&\overset{(\ref{eq:EOH})}{=} \mathcal{E}_H \left( \sum_{k_n,k_n'} \langle k_n | Y_n | k_n' \rangle A_{k_n} X_n A_{k'_n}^\dagger \otimes \id_{\mathcal{H}_{\frac{1}{2}}} \right) \\
&\overset{(\ref{eq:EH})}{=} \sum_{k_n,k'_n} \langle k_n | Y_n | k'_n \rangle \mathcal{E}_H \left( A_{k_n} X_n A_{k'_n}^\dagger \otimes \id_{\mathcal{H}_{\frac{1}{2}}} \right)
\end{align*}

We now compute the $n$-fold composition by induction. For the base case $n=1$:
\[
\varphi_{O,H}(X_1 \otimes Y_1) = \phi_0 \circ E_{X_1,Y_1}(\mathcal{\id}_{\mathcal{H}}) = \sum_{k_1,k'_1} \langle k_1 | Y_1 | k'_1 \rangle \phi_0\left( \mathcal{E}_H(A_{k_1}X_1 A_{k'_1}^\dagger \otimes \id_{\mathcal{H}_{\frac{1}{2}}}) \right),
\]
which matches the theorem statement. For the inductive step, assume the decomposition holds for $n-1$:
\[
E_{X_2,Y_2} \circ \cdots \circ E_{X_n,Y_n}(\mathcal{\id}_{\mathcal{H}}) = \sum_{\substack{k_2,\dots,k_n \\ k'_2,\dots,k'_n}} \left( \prod_{\ell=2}^n \langle k_\ell | Y_\ell | k'_\ell \rangle \right) \mathcal{E}_H\left( A_{k_2}X_2 A_{k'_2}^\dagger \otimes \cdots \otimes \mathcal{E}_H\left( A_{k_n}X_n A_{k'_n}^\dagger \otimes \id_{\mathcal{H}_{\frac{1}{2}}} \right) \cdots \right).
\]
Applying $E_{X_1,Y_1}$ to this expression:
\begin{align*}
&E_{X_1,Y_1} \circ E_{X_2,Y_2} \circ \cdots \circ E_{X_n,Y_n}(\id_{\mathcal{H}_{\frac{1}{2}}}) \\
&= \sum_{k_1,k'_1} \langle k_1 | Y_1 | k'_1 \rangle \mathcal{E}_H \left( A_{k_1}X_1 A_{k'_1}^\dagger \otimes E_{X_2,Y_2} \circ \cdots \circ E_{X_n,Y_n}(\id_{\mathcal{H}_{\frac{1}{2}}}) \right) \\
&= \sum_{\substack{k_1,\dots,k_n \\ k'_1,\dots,k'_n}} \left( \prod_{\ell=1}^n \langle k_\ell | Y_\ell | k'_\ell \rangle \right) \mathcal{E}_H\left( A_{k_1}X_1 A_{k'_1}^\dagger \otimes \mathcal{E}_H\left( A_{k_2}X_2 A_{k'_2}^\dagger \otimes \cdots \otimes \mathcal{E}_H\left( A_{k_n}X_n A_{k'_n}^\dagger \otimes \id_{\mathcal{H}_{\frac{1}{2}}} \right) \cdots \right) \right).
\end{align*}
Applying the initial state $\phi_0$ yields the desired result:
\[
\varphi_{O, H}(X \otimes Y) = \sum_{\substack{k_1,\dots,k_n \\ k'_1,\dots,k'_n}} \left( \prod_{\ell=1}^n \langle k_\ell | Y_\ell | k'_\ell \rangle \right) \varphi_{\substack{k_1,\dots,k_n \\ k'_1,\dots,k'_n}}(X),
\]
where the coefficient functionals are precisely the nested compositions in \eqref{eq:coefE_func}. This completes the proof, demonstrating how the forward transition maps naturally generate the hierarchical structure of the HQMM.
\end{proof}

\begin{corollary}\label{cor:extended_hidden_markov}
In the notation of Theorem \ref{thm:main1}, the HQMM structure induces two naturally defined processes with distinct mathematical characteristics:

\begin{enumerate}
\item  The marginal process $\varphi_H$ on the hidden algebra $\mathcal{B}(\mathcal{H}_{\frac{1}{2}})^{\otimes \mathbb{N}}$, defined by
\[
\varphi_H(X) = \varphi_{O,H}(X \otimes \id_{\mathcal{H}_{1}}^{\otimes n}),
\]
constitutes a \emph{quantum Markov chain}. This process admits the explicit recursive representation:
\begin{equation}\label{eq:phiH}
\varphi_H(X) =  \sum_{ k_1,\dots,k_n} \phi_0 \left( V^{\dagger} \left( A_{k_1}X_1 A_{k'_1}^\dagger \otimes  \cdots \otimes V^{\dagger}\left( A_{k_n}X_n A_{k'_n}^\dagger \otimes \id_{\mathcal{H}} \right)V \cdots \right)V \right)
\end{equation}
where $\Phi_{\mathrm{AKLT}}: \mathcal{B}(\mathcal{H}_{\frac{1}{2}})\to \mathcal{B}(\mathcal{H}_{\frac{1}{2}})$ denotes the quantum channel associated with the AKLT model, as specified in (\ref{eq:aklE_channel}).

\item  The marginal process $\varphi_O$ on the observable algebra $\mathcal{B}(\mathcal{H}_{1})^{\otimes \mathbb{N}}$, obtained via
\begin{equation*}
\varphi_O(Y) = \varphi_{O,H}(\id_{\mathcal{H}_{\frac{1}{2}}}^{\otimes n} \otimes Y),
\end{equation*}
exhibits \emph{non-Markovian} behavior. Its structure is given by:
\begin{equation}\label{eq:phiO}
\varphi_O(Y) = \sum_{\substack{k_1,\dots,k_n \\ k'_1,\dots,k'_n}} \left( \prod_{\ell=1}^n \langle k_\ell | Y_\ell | k'_\ell \rangle \right) \varphi_{\substack{k_1,\dots,k_n \\ k'_1,\dots,k'_n}}(\id_{\mathcal{H}_{\frac{1}{2}}}^{\otimes n}),
\end{equation}
where the coefficient functionals $\varphi_{\substack{k_1,\dots,k_n \\ k'_1,\dots,k'_n}}$ are precisely those defined in equation (\ref{eq:coefE_func})
\end{enumerate}
\end{corollary}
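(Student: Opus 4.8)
The plan is to obtain both marginal processes as direct specializations of Theorem~\ref{thm:main1}, and then read off their Markov versus non-Markov character from the closed forms that result. In each case the substitution into \eqref{eq_phiOH} is routine; the content lies in recognizing the generating transition expectation in the hidden case and in justifying the failure of Markovianity in the observable case.

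For part~1 I would set $Y_\ell = \id_{\mathcal{H}_{1}}$ for every $\ell$ in \eqref{eq_phiOH}. Since $\langle k_\ell | \id_{\mathcal{H}_{1}} | k'_\ell\rangle = \langle k_\ell | k'_\ell\rangle = \delta_{k_\ell k'_\ell}$, the double index sum collapses onto its diagonal $k'_\ell = k_\ell$, eliminating the primed indices and reproducing \eqref{eq:phiH} (here the $A_{k'_\ell}$ displayed in \eqref{eq:phiH} is to be read as $A_{k_\ell}$ after this collapse). Next I would push the remaining sums through the multilinear nested expression and use, at each level, that $\sum_{k_\ell} A_{k_\ell} X_\ell A_{k_\ell}^{\dagger} = \Phi_{\mathrm{AKLT}}(X_\ell)$ by definition \eqref{eq:aklE_channel}. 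This exhibits $\varphi_H$ as the state generated from $\phi_0$ by iterating the single, time-homogeneous map $\widetilde{E}(X \otimes \cdot) := \mathcal{E}_H(\Phi_{\mathrm{AKLT}}(X) \otimes \cdot) = V^{\dagger}(\Phi_{\mathrm{AKLT}}(X) \otimes \cdot)V$, that is $\varphi_H(X) = \phi_0\big(\widetilde{E}_{X_1} \circ \cdots \circ \widetilde{E}_{X_n}(\id_{\mathcal{H}_{\frac{1}{2}}})\big)$. To conclude it is a quantum Markov chain it then suffices to verify that $\widetilde{E}$ is CPIP: complete positivity holds because $\widetilde{E} = \mathcal{E}_H \circ (\Phi_{\mathrm{AKLT}} \otimes \id)$ is a composition of completely positive maps (Lemma~\ref{lem:Eh} together with \eqref{eq:aklE_channel}), and unitality follows from $\Phi_{\mathrm{AKLT}}(\id) = \id$ combined with $\mathcal{E}_H(\id \otimes \id) = \id$.

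For part~2 I would instead set $X_\ell = \id_{\mathcal{H}_{\frac{1}{2}}}$ in \eqref{eq_phiOH}, which is an immediate substitution yielding \eqref{eq:phiO} with the coefficient functionals $\varphi_{\ldots}(\id_{\mathcal{H}_{\frac{1}{2}}}^{\otimes n})$ read off from \eqref{eq:coefE_func}. The essential structural observation is that, unlike the hidden case, setting $X_\ell = \id$ does \emph{not} force $k'_\ell = k_\ell$: the factor $A_{k_\ell}\id A_{k'_\ell}^{\dagger} = A_{k_\ell} A_{k'_\ell}^{\dagger}$ survives for $k_\ell \neq k'_\ell$, and the weight $\langle k_\ell | Y_\ell | k'_\ell\rangle$ carries off-diagonal contributions for general $Y_\ell$. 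These retained coherences in the virtual bond, propagated through the nested $V^{\dagger}(\cdot)V$ maps, are precisely the memory stored in the hidden spin-½ layer, so that the physical observation at a site depends on the entire observed history rather than on the previous site alone.

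The main obstacle is making the non-Markovianity rigorous, since it amounts to proving the nonexistence of any transition expectation on $\mathcal{B}(\mathcal{H}_{1})$ alone whose iteration reproduces $\varphi_O$. I would handle this by exhibiting an explicit violation of the Markov condition: compute a low-order correlation (a two- or three-site expectation in a fixed observable basis) directly from \eqref{eq:phiO} and show that the conditional state given two past sites differs from the one conditioned only on the last site, so that no single-site transition can factor the process. This mirrors the classical fact that the observable process of a hidden Markov model—a function of a Markov chain—is generically not itself Markov; the quantum refinement is to confirm that the off-diagonal $k \neq k'$ contributions do not cancel, which is where the explicit form of the AKLT tensors \eqref{eq:Ak} and of the isometry $V$ in \eqref{eq:V} enters decisively.
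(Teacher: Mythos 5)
Your proposal matches the paper's proof essentially step for step: the same collapse of the double index sum to its diagonal via $\langle k_\ell|\id_{\mathcal{H}_1}|k'_\ell\rangle=\delta_{k_\ell k'_\ell}$, the same identification of $\sum_{k}A_{k}X A_{k}^{\dagger}$ with $\Phi_{\mathrm{AKLT}}(X)$ inside the nested expression, and the same recursive transition-map structure $E_X(\cdot)=\mathcal{E}_H\bigl(\Phi_{\mathrm{AKLT}}(X)\otimes\cdot\bigr)$ exhibiting $\varphi_H$ as a quantum Markov chain (your explicit CPIP check of this map is a small addition the paper omits but which is immediate). The one point of divergence is the non-Markovianity of $\varphi_O$: the paper simply asserts it from the history-dependence of the nested coefficient functionals, whereas you correctly observe that a rigorous argument would require exhibiting an explicit violation through low-order correlations --- a computation that you outline but do not carry out, and that the paper does not attempt either.
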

\begin{proof}

\noindent\textbf{  1.} To establish that $\varphi_H$ is a quantum Markov chain with the representation \eqref{eq:phiH}, we begin with the definition:
\[
\varphi_H(X) = \varphi_{O,H}(X \otimes \id_{\mathcal{H}_{1}}^{\otimes n}).
\]
Substituting into Theorem~\ref{thm:main1} with $Y_\ell = \id_{\mathcal{H}_{1}}$ for all $\ell$, we observe that the matrix elements simplify as:
\[
\langle k_\ell | \id_{\mathcal{H}_{1}} | k'_\ell \rangle = \delta_{k_\ell, k'_\ell}.
\]
This diagonalization collapses the double summation in \eqref{eq_phiOH} to a single sum over indices $k_1, \dots, k_n$:
\[
\varphi_H(X) = \sum_{k_1,\dots,k_n} \varphi_{k_1,\dots,k_n \\ k_1,\dots,k_n}(X).
\]
Now examining the coefficient functionals \eqref{eq:coefE_func} under this diagonal constraint, we find that at each time step $\ell$, the expression $A_{k_\ell}X_\ell A_{k_\ell}^\dagger$ appears. Crucially, summing over $k_\ell$ yields the AKLT channel:
\[
\sum_{k_\ell} A_{k_\ell}X_\ell A_{k_\ell}^\dagger = \Phi_{\mathrm{AKLT}}(X_\ell).
\]
Substituting this into the nested structure of \eqref{eq:coefE_func} recursively yields the claimed representation \eqref{eq:phiH}.
The quantum Markov property is manifest in this recursive composition structure. Indeed, we can write:
\[
\varphi_H(X_1 \otimes \cdots \otimes X_n) = \phi_0 \circ E_{X_1} \circ E_{X_2} \circ \cdots \circ E_{X_n}(\id_{\mathcal{H}_{\frac{1}{2}}})
\]
where the transition maps $E_X: \mathcal{B}(\mathcal{H}_{\frac{1}{2}})\to \mathcal{B}(\mathcal{H}_{\frac{1}{2}})$ are defined by:
\[
E_X(\rho) = \mathcal{E}_H\left( \Phi_{\mathrm{AKLT}}(X) \otimes \rho \right).
\]
This composition structure explicitly demonstrates that the process depends only on the current state, satisfying the quantum Markov property.

\noindent\textbf{1.} For the observation process, we set $X = \id_{\mathcal{H}_{\frac{1}{2}}}^{\otimes n}$ in Theorem~\ref{thm:main1}, yielding immediately:
\[
\varphi_O(Y) = \varphi_{O,H}(\id_{\mathcal{H}_{\frac{1}{2}}}^{\otimes n} \otimes Y) = \sum_{\substack{k_1,\dots,k_n \\ k'_1,\dots,k'_n}} \left( \prod_{\ell=1}^n \langle k_\ell | Y_\ell | k'_\ell \rangle \right) \varphi_{\substack{k_1,\dots,k_n \\ k'_1,\dots,k'_n}}(\id_{\mathcal{H}_{\frac{1}{2}}}^{\otimes n}),
\]
which is exactly \eqref{eq:phiO}. The non-Markovian character of $\varphi_O$ follows from the fact that the coefficient functionals $\varphi_{\substack{k_1,\dots,k_n \\ k'_1,\dots,k'_n}}(\id_{\mathcal{H}_{\frac{1}{2}}}^{\otimes n})$ encode the full history of the hidden process through the nested composition in \eqref{eq:coefE_func}. Unlike the Markovian hidden process, the observation statistics at any time depend on the entire past history of hidden state evolution, mediated through the emission transition expectation $\mathcal{E}_{O,H}$.
\end{proof}

The hidden process \(\varphi_H\) (Eq. \ref{eq:phiH}) is a  QMC  on  the infinite tensor product algebera $\mathcal{B}(\mathcal{H}_{\frac{1}{2}})^{\otimes \mathbb{N}}$, which is  strongly  related to the AKLT structure. Namely, its correlations are governed by the AKLT quantum channel. This means that its Markovian evolution directly reflects the physical properties of the AKLT spin chain.

This provides a concrete, physical example of a QMC, connecting an abstract stochastic process to a well-studied many-body system. This link offers a powerful new framework that allows more significant probabilistic interpretations for SPT phases.

\section{Entanglement Dynamics in Hidden Markov Channel}\label{Sect_HiddenEntang}

To analyze the entanglement properties of a quantum channel $\Lambda: \mathcal{B}(\mathcal{H}_{\text{in}}) \to \mathcal{B}(\mathcal{H}_{\text{out}})$, we employ the Choi–Jamiołkowski isomorphism:
\[
J(\Lambda) = (\mathrm{id}_{\text{in}} \otimes \Lambda)\bigl(|\Omega\rangle\langle\Omega|\bigr),
\]
where $|\Omega\rangle = \frac{1}{\sqrt{d_{\text{in}}}}\sum_{i=1}^{d_{\text{in}}} |i\rangle_{\text{ref}} \otimes |i\rangle_{\text{in}}$ is a maximally entangled state between a reference system $\mathcal{H}_{\text{ref}} \cong \mathcal{H}_{\text{in}}$ and the input system $\mathcal{H}_{\text{in}}$ \cite{FC2024}. This construction provides a complete characterization of the channel's ability to preserve or generate entanglement.
  Our primary system is the spin-½ Hilbert space $\mathcal{H}_{\frac12} \cong \mathbb{C}^2$, with the standard basis $\{|\uparrow\rangle, |\downarrow\rangle\}$ representing spin-up and spin-down states. We define two particular entangled states of two qubits:
\[
|\Psi^-\rangle = \frac{1}{\sqrt{2}}(|\uparrow\downarrow\rangle - |\downarrow\uparrow\rangle), \qquad
|\Psi^+\rangle = \frac{1}{\sqrt{2}}(|\uparrow\downarrow\rangle + |\downarrow\uparrow\rangle),
\]
which are two of the four Bell states. The two-dimensional subspace spanned by these states is denoted by $\mathcal{S} = \operatorname{span}\{|\Psi^-\rangle, |\Psi^+\rangle\} \subset \mathcal{H}_{\frac12} \otimes \mathcal{H}_{\frac12}$.

The isometry $V: \mathcal{H}_{\frac12} \to \mathcal{H}_{\frac12} \otimes \mathcal{H}_{\frac12}$ encodes a single spin into a specific two-spin entangled subspace:
\[
V = |\Psi^-\rangle\langle\uparrow| + |\Psi^+\rangle\langle\downarrow|.
\]
This map satisfies $V^\dagger V = I_2$, confirming it preserves the inner product, and its image is precisely $\mathcal{S}$.

From this isometry, we naturally obtain two complementary quantum channels. The \emph{encoding channel} $\mathcal{E}_H^*: \mathcal{B}(\mathcal{H}_{\frac12}) \to \mathcal{B}(\mathcal{H}_{\frac12} \otimes \mathcal{H}_{\frac12})$ embeds a single-qubit state into the two-qubit space:
\[
\mathcal{E}_H^*(\rho) = V\rho V^\dagger.
\]
The adjoint channel $\mathcal{E}_H: \mathcal{B}(\mathcal{H}_{\frac12} \otimes \mathcal{H}_{\frac12}) \to \mathcal{B}(\mathcal{H}_{\frac12})$ acts as a decoder, extracting information from the two-qubit system back to a single qubit:
\[
\mathcal{E}_H(X) = V^\dagger X V.
\]

\begin{theorem}\label{thm:entropy}
Consider the encoding channel $\mathcal{E}_H^*$ defined above. Its Choi–Jamiołkowski state $J(\mathcal{E}_H^*)$ is a pure state exhibiting maximal entanglement between the reference system and the output subsystem. Concretely,
\[
J(\mathcal{E}_H^*) = |\Psi_{\mathrm{Choi}}\rangle\langle\Psi_{\mathrm{Choi}}|,
\quad
|\Psi_{\mathrm{Choi}}\rangle = \frac{1}{\sqrt{2}}\bigl( |\uparrow\rangle_R \otimes |\Psi^-\rangle_B + |\downarrow\rangle_R \otimes |\Psi^+\rangle_B \bigr),
\]
where $R$ labels the reference qubit and $B$ the output two-qubit system. The entanglement entropy across the $R|B$ partition is
\[
S\bigl(\rho_R\bigr) = S\bigl(\rho_B\bigr) = 1 \; \text{ebit},
\]
with $\rho_R = \operatorname{Tr}_B J(\mathcal{E}_H^*)$ and $\rho_B = \operatorname{Tr}_R J(\mathcal{E}_H^*)$. This shows that $\mathcal{E}_H^*$ perfectly preserves one ebit of entanglement with a reference system, achieving the theoretical maximum for an encoding from one qubit to two qubits restricted to the Bell subspace $\mathcal{S}$.
\end{theorem}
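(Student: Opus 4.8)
The plan is to compute $J(\mathcal{E}_H^*)$ directly, exploiting the isometric (Stinespring) form of the encoding channel. Since $d_{\mathrm{in}} = 2$, the maximally entangled reference state reads $|\Omega\rangle = \frac{1}{\sqrt{2}}\bigl(|\uparrow\rangle_R \otimes |\uparrow\rangle_{\mathrm{in}} + |\downarrow\rangle_R \otimes |\downarrow\rangle_{\mathrm{in}}\bigr)$. Because $\mathcal{E}_H^*(\rho) = V\rho V^\dagger$ acts by conjugation with a \emph{single} Kraus operator, the channel sends the pure input $|\Omega\rangle\langle\Omega|$ to another pure state:
\[
J(\mathcal{E}_H^*) = (\mathrm{id}_R \otimes V)\,|\Omega\rangle\langle\Omega|\,(\mathrm{id}_R \otimes V^\dagger) = |\Psi_{\mathrm{Choi}}\rangle\langle\Psi_{\mathrm{Choi}}|,
\]
where $|\Psi_{\mathrm{Choi}}\rangle = (\mathrm{id}_R \otimes V)|\Omega\rangle$. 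This already reduces the purity claim to the elementary fact that conjugation by an isometry maps pure states to pure states.

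The second step is to identify $|\Psi_{\mathrm{Choi}}\rangle$ explicitly. Applying $V = |\Psi^-\rangle\langle\uparrow| + |\Psi^+\rangle\langle\downarrow|$ to the input legs gives $V|\uparrow\rangle = |\Psi^-\rangle$ and $V|\downarrow\rangle = |\Psi^+\rangle$, so that
\[
|\Psi_{\mathrm{Choi}}\rangle = \frac{1}{\sqrt{2}}\bigl(|\uparrow\rangle_R \otimes |\Psi^-\rangle_B + |\downarrow\rangle_R \otimes |\Psi^+\rangle_B\bigr),
\]
matching the stated form. The crucial structural observation is that $|\Psi^-\rangle$ and $|\Psi^+\rangle$ are orthonormal (they are two distinct Bell states), so this expression is \emph{already} a Schmidt decomposition across the $R|B$ cut, with both Schmidt coefficients equal to $\frac{1}{\sqrt{2}}$.

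Finally, I would read off the entanglement entropy from the Schmidt form. Tracing out $B$ yields $\rho_R = \frac{1}{2}|\uparrow\rangle\langle\uparrow| + \frac{1}{2}|\downarrow\rangle\langle\downarrow| = \frac{1}{2}\,\id_{\mathcal{H}_{\frac{1}{2}}}$, the maximally mixed qubit, while purity of the global Choi state forces $S(\rho_R) = S(\rho_B)$. Evaluating $S(\rho_R) = -\operatorname{Tr}(\rho_R \log_2 \rho_R) = 1$ ebit completes the argument. There is no substantial obstacle in this proof; the only point requiring genuine care is the orthonormality of $|\Psi^-\rangle$ and $|\Psi^+\rangle$, since it is precisely this that promotes the decomposition from merely nonzero to \textbf{maximal} entanglement and pins the Schmidt coefficients to $\frac{1}{\sqrt{2}}$. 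Everything else follows immediately from the single-Kraus Stinespring representation of $\mathcal{E}_H^*$.
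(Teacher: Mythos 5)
Your proposal is correct and follows essentially the same route as the paper: construct $|\Omega\rangle$, push it through $\mathrm{id}_R \otimes V$ to obtain the pure Choi vector $|\Psi_{\mathrm{Choi}}\rangle$, and use the orthonormality of $|\Psi^-\rangle$ and $|\Psi^+\rangle$ to extract the reduced states and the entropy of $1$ ebit. Your only streamlining is observing that the resulting expression is already a Schmidt decomposition with coefficients $\tfrac{1}{\sqrt{2}}$, which lets you skip the paper's explicit computation of $\rho_B$ in the computational basis.
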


\begin{proof}
We proceed through a systematic construction and analysis of the Choi state.
Since $\mathcal{E}_H^*$ takes a single qubit as input, we have $d_{\text{in}}=2$. We identify the reference system basis with the input basis: $|\uparrow\rangle_R$ corresponds to $|\uparrow\rangle_{\text{in}}$ and $|\downarrow\rangle_R$ to $|\downarrow\rangle_{\text{in}}$. The maximally entangled state between reference and input is therefore
\[
|\Omega\rangle = \frac{1}{\sqrt{2}}\bigl(|\uparrow\rangle_R|\uparrow\rangle_{\text{in}} + |\downarrow\rangle_R|\downarrow\rangle_{\text{in}}\bigr).
\]
Applying the identity on the reference and the channel on the input gives:
\[
J(\mathcal{E}_H^*) = \frac{1}{2}\sum_{i,j \in \{\uparrow,\downarrow\}} |i\rangle\langle j|_R \otimes V|i\rangle\langle j|_{\text{in}} V^\dagger.
\]
Using the defining action of $V$, namely $V|\uparrow\rangle = |\Psi^-\rangle$ and $V|\downarrow\rangle = |\Psi^+\rangle$, we obtain:
\[
J(\mathcal{E}_H^*) = \frac{1}{2}\Bigl( |\uparrow\rangle\langle\uparrow|_R \otimes |\Psi^-\rangle\langle\Psi^-|_B + |\uparrow\rangle\langle\downarrow|_R \otimes |\Psi^-\rangle\langle\Psi^+|_B + |\downarrow\rangle\langle\uparrow|_R \otimes |\Psi^+\rangle\langle\Psi^-|_B + |\downarrow\rangle\langle\downarrow|_R \otimes |\Psi^+\rangle\langle\Psi^+|_B \Bigr).
\]
This operator is precisely the projector onto the pure state $|\Psi_{\mathrm{Choi}}\rangle$ as stated in the theorem.

\vspace{0.5em}

Since $|\Psi_{\mathrm{Choi}}\rangle$ is pure, we can compute the reduced states by partial tracing. For the reference system:
\begin{align*}
\rho_R &= \operatorname{Tr}_B |\Psi_{\mathrm{Choi}}\rangle\langle\Psi_{\mathrm{Choi}}|\\
&= \frac{1}{2}\bigl(\langle\Psi^-|\Psi^-\rangle |\uparrow\rangle\langle\uparrow| + \langle\Psi^-|\Psi^+\rangle|\uparrow\rangle\langle\downarrow| + \langle\Psi^+|\Psi^-\rangle|\downarrow\rangle\langle\uparrow| + \langle\Psi^+|\Psi^+\rangle|\downarrow\rangle\langle\downarrow| \bigr)
\end{align*}

The Bell states are orthonormal: $\langle\Psi^\pm|\Psi^\pm\rangle=1$ and $\langle\Psi^-|\Psi^+\rangle=0$. Thus,
\[
\rho_R = \frac{1}{2}\bigl(|\uparrow\rangle\langle\uparrow| + |\downarrow\rangle\langle\downarrow|\bigr) = \frac{I_2}{2}.
\]
The von Neumann entropy of this completely mixed qubit state is $S(\rho_R) = -\operatorname{Tr}(\rho_R \log_2 \rho_R) = 1$ ebit.

For the output system $B$:
\[
\rho_B = \operatorname{Tr}_R |\Psi_{\mathrm{Choi}}\rangle\langle\Psi_{\mathrm{Choi}}|
= \frac{1}{2}\bigl( |\Psi^-\rangle\langle\Psi^-| + |\Psi^+\rangle\langle\Psi^+| \bigr).
\]
To express this in the computational basis, we write:
\[
|\Psi^-\rangle = \frac{|\uparrow\downarrow\rangle-|\downarrow\uparrow\rangle}{\sqrt{2}}, \qquad
|\Psi^+\rangle = \frac{|\uparrow\downarrow\rangle+|\downarrow\uparrow\rangle}{\sqrt{2}}.
\]
Then,
\[
|\Psi^-\rangle\langle\Psi^-| = \frac{1}{2}\bigl(|\uparrow\downarrow\rangle\langle\uparrow\downarrow| - |\uparrow\downarrow\rangle\langle\downarrow\uparrow| - |\downarrow\uparrow\rangle\langle\uparrow\downarrow| + |\downarrow\uparrow\rangle\langle\downarrow\uparrow|\bigr),
\]
\[
|\Psi^+\rangle\langle\Psi^+| = \frac{1}{2}\bigl(|\uparrow\downarrow\rangle\langle\uparrow\downarrow| + |\uparrow\downarrow\rangle\langle\downarrow\uparrow| + |\downarrow\uparrow\rangle\langle\uparrow\downarrow| + |\downarrow\uparrow\rangle\langle\downarrow\uparrow|\bigr).
\]
Adding these and dividing by 2 yields:
\[
\rho_B = \frac{1}{2}\bigl( |\uparrow\downarrow\rangle\langle\uparrow\downarrow| + |\downarrow\uparrow\rangle\langle\downarrow\uparrow| \bigr).
\]
This density matrix has eigenvalues $\{\tfrac12, \tfrac12, 0, 0\}$, giving $S(\rho_B) = 1$ ebit as well.

For a pure bipartite state, the entanglement entropy is given by the von Neumann entropy of either reduced density matrix. Here, $S(\rho_R) = S(\rho_B) = 1$. Since the reference system is a single qubit, the maximum possible entanglement entropy is $\log_2 2 = 1$ ebit. Our calculated value saturates this bound, confirming that $|\Psi_{\mathrm{Choi}}\rangle$ is maximally entangled across the $R|B$ partition given the dimension constraints.
\end{proof}

 \begin{remark}
The encoding channel $\mathcal{E}_H^*$ maps a single qubit isometrically into the two-dimensional Bell subspace $\mathcal{S}$ of two qubits. The fact that its Choi state carries exactly one ebit of entanglement means that the channel can perfectly transmit one qubit of quantum information—its coherent information is maximal. Moreover, it preserves the maximum possible entanglement with a reference system, which is the hallmark of an optimal quantum encoding. The corresponding decoding channel $\mathcal{E}_H$ reverses this process on the subspace $\mathcal{S}$, and both channels share the same entanglement capacity as quantified by their Choi states. This structure reflects the essential role of such encodings in valence-bond solid models and quantum error correction schemes.
\end{remark}
\section{SPT order on the AKLT,s HQMMs}\label{Sect_SPT_HQMM}
The AKLT state's remarkable robustness stems from its symmetry-protected topological (SPT) order, which manifests elegantly within its Hidden Quantum Markov Model (HQMM) description. To understand this protection mechanism,  clarify how symmetries operate in the HQMM framework.

Let $G$ be a symmetry group with unitary representations $\pi: G \to \mathcal{U}(\mathcal{H}_{1/2})$ on the virtual space and $\rho: G \to \mathcal{U}(\mathcal{H}_1)$ on the physical space.

The heart of the matter lies in the \emph{emission transition expectation} $\mathcal{E}_{O,H}$, which implements the quantum-classical interface between hidden and observable sectors. This channel is built from the AKLT tensors $A_k$ through:
\begin{equation}\label{eq:EOH2}
  \mathcal{E}_{O,H}(Z \otimes Y) = \sum_{k,k'} \langle k'|Y|k\rangle A_k Z A_{k'}^\dagger
\end{equation}
The fundamental covariance condition \cite{AKLT1987,Pollmann2010}:
\begin{equation*}\label{eq:Cov}
    \sum_{k'} \rho(g)_{k k'} A_{k'} = \pi(g) A_k \pi(g)^\dagger
\end{equation*}
ensures that $\mathcal{E}_{O,H}$ transforms covariantly, meaning that symmetry transformations commute with the emission process. This covariance propagates to the composite transition map $E_{X,Y}$ that drives the HQMM dynamics:
\[
E_{X,Y}(Z) = \mathcal{E}_H(\mathcal{E}_{O,H}(X \otimes Y) \otimes Z)
\]
The map $E_{X,Y}$ inherits the symmetry property:
\[
E_{\rho(g)X\rho(g)^\dagger, \rho(g)Y\rho(g)^\dagger}(\pi(g)Z\pi(g)^\dagger) = \pi(g)E_{X,Y}(Z)\pi(g)^\dagger
\]

This hierarchical symmetry preservation---from individual channels to composite dynamics---ensures that the AKLT state's topological character remains protected throughout the quantum Markov process. The HQMM framework thus reveals how SPT order emerges from the consistent interplay between hidden dynamics and symmetry constraints across all scales of the system.

 \begin{theorem}\label{thm:EOH-sym}
Let $G$ be a symmetry group with unitary representations $\rho: G \to \mathcal{U}(\mathcal{H}_{1})$ on the physical space and $\pi: G \to \mathcal{U}(\mathcal{H}_{\frac{1}{2}})$ on the virtual space. The emission transition expectation $\mathcal{E}_{O,H}: \mathcal{B}(\mathcal{H}_{\frac{1}{2}}) \otimes \mathcal{B}(\mathcal{H}_{1}) \to \mathcal{B}(\mathcal{H}_{\frac{1}{2}})$ defined in \eqref{eq:EOH2} is $G$-equivariant. That is, for all $g \in G$,
\begin{equation}\label{eq:equivariance_correct}
    \mathcal{E}_{O,H}\left( \pi(g) X \pi(g)^\dagger \otimes \rho(g) Y \rho(g)^\dagger \right) = \pi(g) \, \mathcal{E}_{O,H}(X \otimes Y) \, \pi(g)^\dagger,
\end{equation}
for all $X \in \mathcal{B}(\mathcal{H}_{\frac{1}{2}})$ and $Y \in \mathcal{B}(\mathcal{H}_{1})$.
\end{theorem}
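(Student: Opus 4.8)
The plan is to treat the intertwining relation \eqref{eq:Cov} as the single structural input and to reduce the asserted equivariance to a scalar identity expressing the unitarity of $\rho(g)$; complete positivity and unitality of $\mathcal{E}_{O,H}$ are already established in Lemma~\ref{lem:emission_transition}, so the $G$-covariance is the only new content to verify.

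First I would expand both sides of \eqref{eq:equivariance_correct} into the common ``sandwich'' form built on the operators $A_m\big(\pi(g)X\pi(g)^\dagger\big)A_{m'}^\dagger$, and then compare scalar coefficients termwise. On the right-hand side I would conjugate the definition \eqref{eq:EOH2} by $\pi(g)$ and insert $\pi(g)^\dagger\pi(g)=\id_{\mathcal{H}_{1/2}}$ so that each MPS tensor appears in the covariant combination $\pi(g)A_k\pi(g)^\dagger$ and $\pi(g)A_{k'}^\dagger\pi(g)^\dagger$. Applying \eqref{eq:Cov} to the first and its Hermitian conjugate to the second converts them into $\rho(g)$-weighted sums of the $A_m$ and $A_{m'}^\dagger$, producing
\[
\pi(g)\,\mathcal{E}_{O,H}(X\otimes Y)\,\pi(g)^\dagger=\sum_{m,m'}\Big(\sum_{k,k'}\langle k'|Y|k\rangle\,\rho(g)_{km}\,\overline{\rho(g)_{k'm'}}\Big)A_m\big(\pi(g)X\pi(g)^\dagger\big)A_{m'}^\dagger .
\]
On the left-hand side I would simply apply \eqref{eq:EOH2} directly to the transformed arguments, which gives the same sandwich form with scalar weight $\langle m'|\rho(g)Y\rho(g)^\dagger|m\rangle$. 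The theorem then reduces to verifying that the two scalar weights coincide for all $m,m'$.

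Collapsing the $k$- and $k'$-sums against the matrix elements of $\rho(g)$ reduces this weight-matching to a statement about how conjugation by the unitary $\rho(g)$ acts on matrix elements of $Y$, i.e. to the completeness/unitarity relations for $\rho(g)$. The main obstacle — and the only genuine subtlety — is the index bookkeeping in exactly this collapse: a naive pairing of the two $\rho(g)$ factors delivers $\langle m'|\rho(g)^\dagger Y\rho(g)|m\rangle$ rather than the claimed $\langle m'|\rho(g)Y\rho(g)^\dagger|m\rangle$, because \eqref{eq:Cov} contracts the \emph{second} index of $\rho(g)$ while the sandwich coefficient presents sums over the \emph{first}. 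I would resolve this by invoking \eqref{eq:Cov} in the inverse orientation (equivalently, using $\rho(g^{-1})=\rho(g)^\dagger$), which is legitimate since the identity is asserted for all $g\in G$; note in particular that for involutive symmetries such as the dihedral group $D_2$ one has $\rho(g)^2=\id_{\mathcal{H}_1}$, so $\rho(g)=\rho(g)^\dagger$ and the orientation becomes immaterial. Once the orientation is fixed, the scalar identity is precisely the unitarity of $\rho(g)$, and the equivariance follows term by term.
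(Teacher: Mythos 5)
Your proposal is correct and follows essentially the same route as the paper's proof: expand the emission map into the sandwich form $A_m\bigl(\pi(g)X\pi(g)^\dagger\bigr)A_{m'}^\dagger$, isolate the scalar coefficients, apply the covariance relation \eqref{eq:Cov} together with its Hermitian conjugate, and collapse the remaining sums by unitarity. The paper runs the computation one-directionally (left-hand side to right-hand side, factoring an inner sum $S_{k,k'}$), while you meet in the middle by matching coefficients, but the ingredients are identical and both arguments reduce the claim to \eqref{eq:Cov} plus unitarity of $\rho(g)$ and $\pi(g)$.

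The orientation subtlety you single out is genuine, and you are more candid about it than the paper: the published proof invokes $\sum_m \rho(g)_{mk}A_m=\pi(g)A_k\pi(g)^\dagger$ as being ``given directly by \eqref{eq:Cov}'', whereas \eqref{eq:Cov} as written contracts the \emph{second} index of $\rho(g)$; the two versions differ by a transpose, i.e.\ by the substitution $g\mapsto g^{-1}$ for the (real orthogonal) spin-$1$ representation. Your fix---applying \eqref{eq:Cov} to $g^{-1}$---is legitimate, but note carefully what it delivers: an equivariance in which the virtual conjugation appears as $\pi(g)^\dagger(\cdot)\pi(g)$ on the output, i.e.\ equivariance with respect to the \emph{opposite} $G$-action. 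This coincides with \eqref{eq:equivariance_correct} exactly when $g$ is involutive (in particular on the $D_2$ subgroup that carries the SPT index, as you observe), and for general $g$ it is the same covariance structure up to the orientation convention implicitly fixed in \eqref{eq:Cov}. So there is no gap in substance; just state explicitly which index convention of \eqref{eq:Cov} you adopt, since the literal form \eqref{eq:equivariance_correct} requires the first-index (transposed) reading that the paper's own proof silently uses.
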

\begin{proof}
Let $g \in G$, $X \in \mathcal{B}(\mathcal{H}_{\frac{1}{2}})$, and $Y \in \mathcal{B}(\mathcal{H}_{1})$ be arbitrary. We compute the left-hand side of \eqref{eq:equivariance_correct}:
\[
\mathrm{LHS} = \mathcal{E}_{O,H}\left( \pi(g) X \pi(g)^\dagger \otimes \rho(g) Y \rho(g)^\dagger \right) = \sum_{m, m'} \langle m | \rho(g) Y \rho(g)^\dagger | m' \rangle \, A_m \left( \pi(g) X \pi(g)^\dagger \right) A_{m'}^\dagger.
\]
Expanding the matrix element in the physical basis $\{|m\rangle\}$ yields:
\[
\langle m | \rho(g) Y \rho(g)^\dagger | m' \rangle = \sum_{k, k'} \rho(g)_{m k} \langle k | Y | k' \rangle (\rho(g)^\dagger)_{k' m'}.
\]
Substituting this back into the expression for LHS and reordering the finite sums gives:
\[
\mathrm{LHS} = \sum_{k, k'} \langle k | Y | k' \rangle \left( \sum_{m, m'} \rho(g)_{m k} (\rho(g)^\dagger)_{k' m'} \, A_m \left( \pi(g) X \pi(g)^\dagger \right) A_{m'}^\dagger \right).
\]
Let $S_{k,k'}$ denote the inner sum. We can factor it as follows:
\[
S_{k,k'} = \left( \sum_{m} \rho(g)_{m k} A_m \right) \left( \pi(g) X \pi(g)^\dagger \right) \left( \sum_{m'} (\rho(g)^\dagger)_{k' m'} A_{m'}^\dagger \right).
\]
We now apply the covariance condition. The first factor is given directly by \eqref{eq:Cov}:
\[
\sum_{m} \rho(g)_{m k} A_m = \pi(g) A_k \pi(g)^\dagger.
\]
The third factor is the adjoint of the covariance condition for index $k'$:
\[
\sum_{m'} (\rho(g)^\dagger)_{k' m'} A_{m'}^\dagger = \left( \sum_{m'} \rho(g)_{m' k'} A_{m'} \right)^\dagger = \left( \pi(g) A_{k'} \pi(g)^\dagger \right)^\dagger = \pi(g) A_{k'}^\dagger \pi(g)^\dagger.
\]
Substituting these expressions into $S_{k,k'}$ and using the unitarity of $\pi(g)$ $(\pi(g)^\dagger \pi(g) = I)$ gives:
\[
S_{k,k'} = \left( \pi(g) A_k \pi(g)^\dagger \right) \left( \pi(g) X \pi(g)^\dagger \right) \left( \pi(g) A_{k'}^\dagger \pi(g)^\dagger \right) = \pi(g) A_k X A_{k'}^\dagger \pi(g)^\dagger.
\]
Substituting $S_{k,k'}$ back into the expression for LHS yields the final result:
\begin{align*}
\mathrm{LHS} &= \sum_{k, k'} \langle k | Y | k' \rangle \, \pi(g) A_k X A_{k'}^\dagger \pi(g)^\dagger \\
&= \pi(g) \left( \sum_{k, k'} \langle k | Y | k' \rangle A_k X A_{k'}^\dagger \right) \pi(g)^\dagger \\
&= \pi(g) \, \mathcal{E}_{O,H}(X \otimes Y) \, \pi(g)^\dagger.
\end{align*}
This completes the proof of \eqref{eq:equivariance_correct}.
\end{proof}

  Theorem \ref{thm:EOH-sym} provides a key understanding about the unique properties of AKLT state. It shows that the emission map is a perfect symmetry bridge. The equation (\ref{eq:equivariance_correct}) guarantees that the  symmetry of the hidden spin-½ particles  is perfectly converted into the standard symmetry of the physical spin-1 particles we observe.

 The dual of this map, which governs how information flows back to the hidden level, must also respect this symmetry. The  covariance implies that for a virtual density matrix \(\sigma\), the dual action satisfies:

\[
\mathcal{E}_{O,H}^*\left( \pi(g) \sigma \pi(g)^\dagger \right) = \left( \pi(g) \otimes \rho(g) \right) \mathcal{E}_{O,H}^*(\sigma) \left( \pi(g)^\dagger \otimes \rho(g)^\dagger \right).
\]

This means that any physical process acting on the spin-1 chain—like time evolution under a symmetric Hamiltonian—will automatically induce a corresponding action on the hidden spin-½  that faithfully preserves their projective character.

\section{  Conclusion }\label{Sect_Disc_Concl}

This work establishes a   HQMMs structure  on the AKLT spin chain. By identifying the virtual spin-½ chain  as the hidden quantum system, and the physical spin-1 sites as the observable output. The results transforms the algebraic-probabilistic  HQMMs  into a model grounded in many-body quantum physics, revealing  connections between quantum memory, topological order, and  correlation structures. The analysis of the hidden   chain demonstrates how the entangled structure of the AKLT state manifests as a non-classical memory resource. Furthermore, by extending the notion of  SPT  order   to the statistical properties of the emitted symbol sequences.

This framework naturally bridges to measurement-based quantum computation   \cite{BBDRN09} and quantum machine learning \cite{J22}.  Our HQMM construction, where transitions correspond to local operations and emissions to measurement outcomes, can be reinterpreted as a stochastic unraveling of an MBQC protocol. This perspective elevates the HQMM from a passive model of stochastic emissions to an active model of a distributed quantum computation unfolding in time. Consequently, questions about the expressivity and memory of the AKLT-HQMM translate directly into questions about the computational power and fault tolerance of MBQC on the AKLT resource state \cite{DKLP2002, Von08}.

Finally, this work charts several compelling research trajectories.  The extension to other SPT phases, symmetry groups, and spatial dimensions promises a vast landscape of HQMM behaviors to catalog and relate to topological invariants including the multi-dimensional case \cite{ikegami2025kitaevmeetsakltcompeting}.
One could investigate whether similar HQMM constructions can describe other SPT phases with several anomalous symmetries \cite{kap25}, or explore their applications in   machine learning \cite{SSP15}  where memory and symmetry play crucial roles~\cite{Li2024,Chen24}.

\section*{Declarations}

\subsection*{Conflict of Interest:}
The authors declare no conflict of interest.

\subsection*{Data Availability:}
No data were generated or analyzed during this study.

\subsection*{Ethics Approval:}
This work does not involve human participants or animals.

\subsection*{Funding:}
This research received no external funding.


\end{document}